\newcommand{\blind}{0}
\def\indist{\rightsquigarrow}
\def\ind{\perp\!\!\!\perp}
\newcommand{\Pb}{\mathbb{P}}
\newcommand{\Pn}{\mathbb{P}_n}
\newcommand{\Gn}{\mathbb{G}_n}
\newcommand{\Gb}{\mathbb{G}}
\newcommand{\E}{\mathbb{E}}
\newcommand{\R}{\mathbb{R}}
\newcommand{\bZ}{\mathbf{Z}}
\newcommand{\bz}{\mathbf{z}}
\newcommand{\bX}{\mathbf{X}}
\newcommand{\bx}{\mathbf{x}}
\newcommand{\bH}{\mathbf{H}}
\newcommand{\bh}{\mathbf{h}}
\newcommand{\bQ}{\mathbf{Q}}
\newcommand{\bD}{\mathbf{D}}
\def\expit{\text{expit}}
\DeclareMathOperator*{\argmax}{arg\,max}
\DeclareSymbolFont{bbold}{U}{bbold}{m}{n}
\DeclareSymbolFontAlphabet{\mathbbold}{bbold}
\newcommand{\one}{\mathbbold{1}}
\newtheorem{theorem}{Theorem}
\newtheorem{lemma}{Lemma}
\newtheorem{corollary}{Corollary}
\newtheorem{algorithm}{Algorithm}
\theoremstyle{remark}
\newtheorem{assumption}{Assumption}
\newtheorem{remark}{Remark}
\begin{document}

\def\spacingset#1{\renewcommand{\baselinestretch}%
{#1}\small\normalsize} \spacingset{1}


\if0\blind
{
  \title{ \bf Nonparametric causal effects based on incremental propensity score interventions}
    \author{Edward H. Kennedy
    \thanks{Edward Kennedy is Assistant Professor in the Department of Statistics, Carnegie Mellon University, Pittsburgh, PA 15213 (e-mail: edward@stat.cmu.edu). The author thanks Traci Kennedy, Miguel Hernan, Kwangho Kim, and the Causal Inference Reading Group at Carnegie Mellon 
    for helpful discussions and comments, and Valerio Bacak for guidance on the National Longitudinal Survey of Youth data analysis.}\hspace{.2cm}
    \\
    Department of Statistics, Carnegie Mellon University \\ \\
    }
  \maketitle
  \setcounter{page}{0}
  \thispagestyle{empty}
} \fi

\if1\blind
{
  \vspace*{.8in}
  \begin{center}
    {\LARGE\bf Nonparametric causal effects based on \\ \vspace{.15in} incremental propensity score interventions  }
\end{center}
  \setcounter{page}{0}
  \medskip
} \fi

\vspace{-.2in}
\begin{abstract}
Most work in causal inference considers deterministic interventions that set each unit's treatment to some fixed value. However, under positivity violations these interventions can lead to non-identification, inefficiency, and effects with little practical relevance. Further, corresponding effects in longitudinal studies are highly sensitive to the curse of dimensionality, resulting in widespread use of unrealistic parametric models. We propose a novel solution to these problems: incremental interventions that shift propensity score values rather than set treatments to fixed values. Incremental interventions have several crucial advantages. First, they avoid positivity assumptions entirely. Second, they require no parametric assumptions and yet still admit a simple characterization of longitudinal effects, independent of the number of timepoints. For example, they allow longitudinal effects to be visualized with a single curve instead of lists of coefficients. After characterizing incremental interventions and giving identifying conditions for corresponding effects, we also develop general efficiency theory, propose efficient nonparametric estimators that can attain fast convergence rates even when incorporating flexible machine learning, and propose a bootstrap-based confidence band and simultaneous test of no treatment effect. Finally we explore finite-sample performance via simulation, and apply the methods to study time-varying sociological effects of incarceration on entry into marriage.
\end{abstract}

\noindent%
{\it Keywords:} observational study, positivity, stochastic intervention, time-varying confounding, treatment effect.
\vfill

\thispagestyle{empty}

\newpage

\spacingset{1}

\section{Introduction} 
\label{sec:intro}

Most work in causal inference considers deterministic interventions that set each unit's treatment to some fixed value. For example, the usual average treatment effect indicates how mean outcomes would change if all units were uniformly assigned treatment versus control. Similarly, standard marginal structural models \autocite{robins2000marginal} describe outcomes had all units followed given exposure trajectories over time (e.g., treated at every time, treated after time $t$, etc.). However, these simple effects are not identified when some units have zero chance to receive given treatment options. This is a violation of the so-called positivity assumption, which has been known in the causal inference literature since at least \textcite{rosenbaum1983central}. Even if positivity is only nearly violated (i.e., chances of some treatment options are merely small), the finite-sample behavior of many common estimators can be severely degraded \autocite{kang2007demystifying, moore2012causal}. 

Similarly, even if positivity holds, in longitudinal studies these standard effects are afflicted by a curse of dimensionality in the number of study timepoints: exponentially many samples are needed to learn about all treatment trajectories. For example, in a simple trial with a binary randomized treatment and ten timepoints, we would need 
nearly 12,000 patients to guarantee $<$1\% chance of having any unrepresented exposure trajectories. The usual way to deal with this problem is to assume it away with a parametric model for how outcomes change across trajectories. However, such models are typically severely wrong if overly simple, and can be hard to interpret otherwise (and are often still misspecified). Further, in the real world, treatment is typically not applied uniformly, so static deterministic interventions may not be of practical policy interest. For example, most medical treatments would never be applied indiscriminately, but instead would be recommended or not based on characteristics of the patient and physician prescribing preference. 

Thus there has been substantial recent interest in dynamic and stochastic interventions, which can depend on unit characteristics and be random rather than deterministic. Examples have been studied for point exposures by \textcite{tian2008identifying, pearl2009causality, dudik2014doubly}, and in longitudinal studies by \textcite{murphy2001marginal, robins2004effects, van2007causal, robins2008estimation, taubman2009intervening, cain2010start, young2011comparative}. 
Particularly relevant to this paper is work by \textcite{diaz2012population, moore2012causal, diaz2013assessing, haneuse2013estimation, young2014identification}, who consider interventions that depend on the observational treatment process. However, to the best of our knowledge, none of the existing intervention effects both avoids positivity conditions entirely and is completely nonparametric (even in studies with many timepoints). 

In this paper we propose novel \textit{incremental intervention effects}, based on shifting propensity scores rather than setting treatment values. We show that such effects can be identified and estimated without any positivity or parametric assumptions, and \textcolor{black}{argue that they can be more realistic than other interventions. One trade-off is that they yield effects that are more descriptive than prescriptive.} We develop nonparametric influence-function-based estimators that can incorporate flexible machine learning tools, while still providing valid parametric-rate inference. Our methods for uniform inference (using the multiplier bootstrap) also yield a new general test of no treatment effect. We conclude with a simulation study, and apply the methods in a longitudinal study of incarceration effects.

\setcounter{section}{1}
\section{Notation \& Setup}
\label{sec:setup}

We consider the case where we observe a sample $(\bZ_1,...,\bZ_n)$ of iid observations from distribution $\Pb$, with 
$$ \bZ = (\bX_1, A_1, \bX_2, A_2, ..., \bX_T, A_T, Y) $$
for covariates $\bX_t$ and treatment $A_t$ at time $t$, and outcome $Y$. For simplicity, at present we consider binary treatments (so the support of $A_t$ is $\mathcal{A}=\{0,1\}$) and completely observed $\bZ$ (so there is no missingness or dropout), but extensions will appear in future work. We use overbars to denote the past history of a variable, so that $\overline\bX_t=(\bX_1,...,\bX_t)$ and $\overline{A}_t=(A_1,...,A_t)$, for example, and we let $\bH_t=(\overline\bX_t, \overline{A}_{t-1})$ denote the past history just prior to treatment at time $t$, with support $\mathcal{H}_t$.

\begin{remark}
The above data setup also covers the case where outcomes are time-varying, i.e., where rather than $\bZ$ the observations are given by
$ \bZ^* = (\bX^*_1, A_1, Y_1,  ..., \bX^*_{T^*}, A_{T^*}, Y_{T^*})$. 
This follows since we can let $\bX_t = (\bX^*_t,Y_{t-1})$ and $Y=Y_{T^*}$ in our original formulation. If interest centers on treatment effects on an earlier outcome $Y_t$ (for $t<T^*$), rather than $Y_{T^*}$, then we can let $Y=Y_t$ and truncate the sequence, defining $T=t$ instead of $T=T^*$. 
\end{remark}

In this paper we use  potential outcomes \autocite{rubin1974estimating}, and so let $Y^{\overline{a}_T}$ denote the outcome that {would have} been observed had the treatment sequence $\overline{A}_T=\overline{a}_T$ been received. The quantity $Y^{\overline{a}_T}$ is an example of a counterfactual based on a \textit{deterministic static intervention}, in which a fixed treatment is applied with probability one and regardless of covariate information (e.g., $\overline{A}_T=\overline{a}_T$ is applied uniformly across units, regardless of past histories $\bH_t$). Deterministic static interventions are the kinds of interventions most commonly considered in practice; examples include the average effect of a point exposure $\E(Y^1-Y^0)$, and standard marginal structural model and structural nested model parameters $\E(Y^{\overline{a}_T})$ and $\E(Y^{\overline{a}_t,0} - Y^{\overline{a}_{t-1},0} \mid \bH_t,A_t)$, respectively \autocite{robins2000marginal, robins2000marginal2}. 

Alternatively, in \textit{deterministic dynamic interventions} \autocite{robins1986new, murphy2001marginal}  treatment at time $t$ is assigned according to a fixed rule $d_t: \mathcal{H}_t \mapsto \mathcal{A}$ that depends on past history. Characterizing and estimating the optimal such rule is a major goal in the optimal dynamic treatment regime literature \autocite{murphy2003optimal, robins2004optimal}. The potential outcome under a sequence of hypothetical rules $\textbf{d}=\overline{d}_T=(d_1,...,d_T)$ can be expressed as $Y^{\textbf{d}}$, where the dependence of the rules on the histories $\bH_t$ is suppressed for notational simplicity, and $\textbf{d}$ is lower-case since the rule is non-random (given the histories). A simple example is the rule $d_t = \one(V_t \geq c_t)$  that assigns treatment if a variable $V_t \subset \bX_t$ passes some threshold $c_t \in \R$, with corresponding mean outcome $\E(Y^{(d_1,...,d_T)})$.

In this paper we propose a new form of \textit{stochastic dynamic intervention}, which is an intervention where treatment at each time is randomly assigned based on a conditional distribution $q_t(a_t \mid \bh_t)$. Stochastic interventions can thus be viewed as random choices among deterministic rules. These interventions have not been studied as extensively as other types, with important exceptions listed in the Introduction (see for example \textcite{diaz2012population, haneuse2013estimation, young2014identification} for review). We express the potential outcome under a stochastic intervention as $Y^{\textbf{Q}}$, where $\textbf{Q}=(Q_1,...,Q_T)$ represents draws from the conditional distributions $q_t$, and is upper-case since the intervention is stochastic. A simple stochastic intervention related to the previous rule $d_t=\one(V_t \geq c_t)$ would be $Q_t = \one(V_t \geq C_t)$ where $C_t \sim N(0,1)$ is now a random threshold.

\section{Incremental Intervention Theory}
\label{sec:ipsi}

In this section we first describe a new class of stochastic dynamic intervention, which we call incremental propensity score interventions, and give some motivation and examples. We go on to show that these interventions are nonparametrically identified without requiring any positivity restrictions on the propensity scores (e.g., the propensity scores do not need to be bounded away from zero and one). Then we describe the efficiency theory for estimating mean outcomes under these interventions, based on a new result for general stochastic interventions that depend on the observational treatment  distribution.

\subsection{Proposed Interventions}

In this paper we propose \textit{incremental propensity score interventions} that replace the observational treatment process (i.e., propensity score) $\pi_t(\bh_t) = \Pb(A_t= 1 \mid \bH_t=\bh_t)$ with a shifted version, based on multiplying the odds of receiving treatment. Specifically, our proposed intervention replaces the observational propensity score $\pi_t$ with the distribution defined by
\begin{equation}
q_t( \bh_t; \delta, \pi_t) = \frac{ \delta \pi_t( \bh_t) }{ \delta \pi_t(\bh_t) + 1 - \pi_t(\bh_t) } \ \text{ for } 0 < \delta < \infty .
\label{eq:qdef}
\end{equation}
The increment parameter $\delta \in (0,\infty)$ is user-specified, and dictates the extent to which the propensity scores are fluctuated from their actual observational values. In practice we recommend considering a range of $\delta$ values, depending on the scientific question at hand. Although the intervention distribution $q_t$ depends on both the increment $\delta$ and the observational propensity $\pi_t$, we often drop this dependence and write $q_t(\bh_t; \delta,\pi_t) = q_t(\bh_t)$ or just $q_t$ to ease notation. As mentioned earlier, \textcite{diaz2012population, moore2012causal, diaz2013assessing, haneuse2013estimation, young2014identification} {have considered other different interventions that depend on the observational treatment process}.

\textcolor{black}{Our choice of $q_t$ in \eqref{eq:qdef} is motivated by both interpretability and mathematical convenience}. In particular it yields
$$ \delta = \frac{ q_t( \bh_t) / \{1 - q_t( \bh_t) \} } {\pi_t( \bh_t) / \{1- \pi_t( \bh_t)\}} = \frac{\text{odds}_q(A_t=1 \mid \bH_t=\bh_t)}{\text{odds}_\pi(A_t=1 \mid \bH_t=\bh_t)} $$
whenever $0 < \pi_t < 1$, 
so the increment $\delta$ is just an odds ratio, indicating how the intervention changes the odds of receiving treatment. As with usual odds ratios, if $\delta> 1$ then the intervention increases the odds of receiving treatment, and if $\delta  < 1$ then the intervention decreases these odds (if $\delta=1$ then $q_t=\pi_t$ so the treatment process is left unchanged). For example, an intervention with $\delta =1.5$ would increase the odds of receiving treatment by 50\% for each patient with $0 < \pi_t < 1$: a patient with an actual 25\% chance of receiving treatment (1/3 odds) would instead have a  33\% chance (1/2 odds) under the intervention.

\textcolor{black}{In addition to other kinds of shifts (e.g., risk ratios) in future work we will consider interventions for which $\delta=\delta_t(\bh_t)$ can depend on time and covariate history. However, even with $\delta$ fixed, incremental interventions are still dynamic, since the conditional distribution $q_t$ depends on the covariate history. In other words,} these interventions are personalized to patient characteristics through the propensity score. For example, under an intervention with $\delta=1.5$, a patient with a 50\% chance of receiving treatment observationally would instead have a 60\% chance under the intervention, while a patient whose chances were 5\%  would only see an increase to 7.3\% \textcolor{black}{(i.e., multiplying the odds by a fixed factor yields different shifts in the probabilities). Contrast this with a usual static intervention, which flatly assigns all patients a particular sequence $\overline{a}_T$ (or a random choice among such sequences) regardless of propensity score.} Figure~\ref{fig:explot} illustrates  incremental interventions with data on $n=20$ simulated observations in a hypothetical study with $T=2$ timepoints. 


\begin{figure}[h!]
\begin{center}
\includegraphics[width=\textwidth]{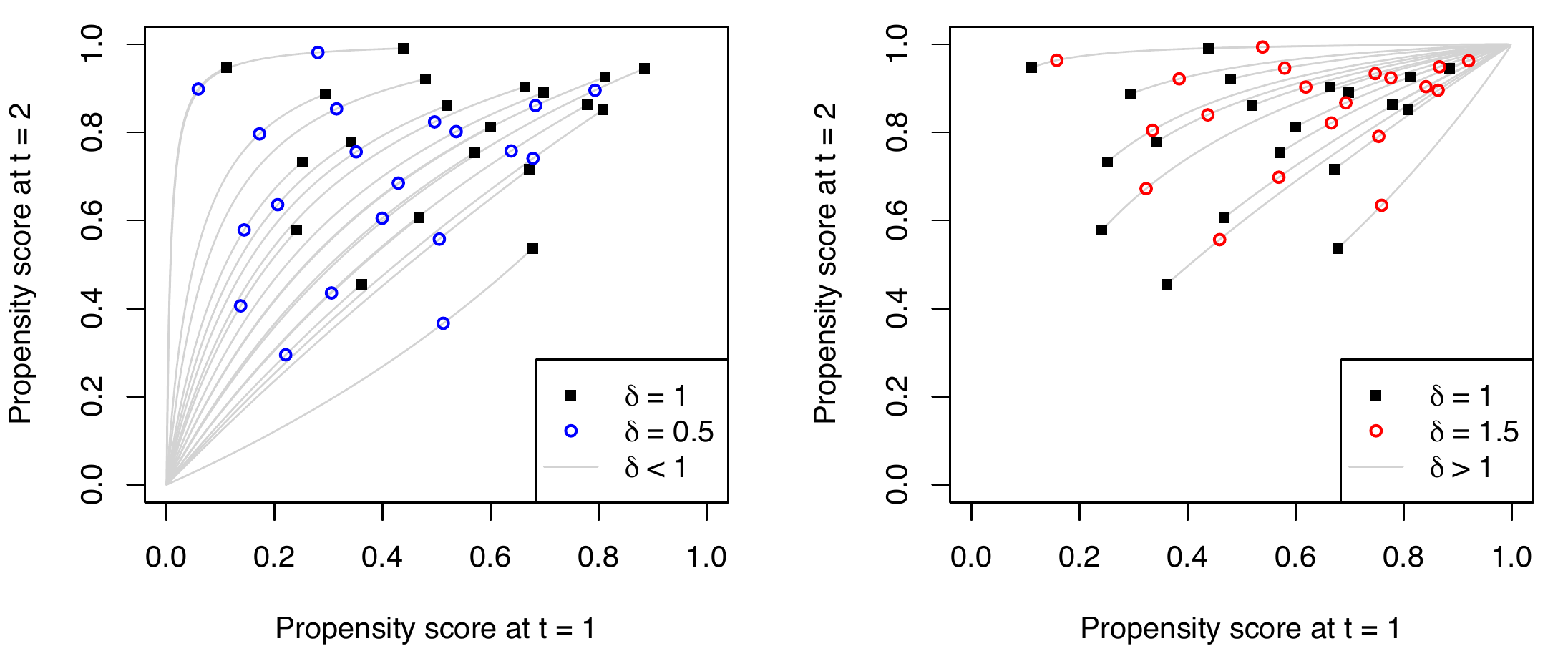}
\caption{Observational propensity scores for $n=20$ simulated units in a study with $T=2$ timepoints, and their values under incremental interventions based on different $\delta$ values ($\delta \leq 1$ in the left plot, $\delta \geq 1$ in the right).} \label{fig:explot}
\end{center}
\end{figure}

\vspace{-.05in}

Figure~\ref{fig:explot} also helps illustrate why incremental interventions require weak identifying assumptions (to be discussed shortly), and \textcolor{black}{can be more likely to occur in practice, compared to other kinds of interventions}. Although usual static interventions (e.g., setting $A=1$) might require forcing treatment on someone with only a 1\% chance of receiving it in the real world, the proposed incremental intervention only requires that the propensity score be slightly shifted (e.g., from 1\% to 1.5\% when $\delta=1.5$). \textcolor{black}{In settings where treatment changes occur more gradually (e.g., when physicians slightly reduce treatment intensity, or judges become slightly more lenient), incremental interventions might be similar to treatment changes that could occur naturally in practice. Even if not especially likely to occur, incremental interventions still might be more realistic than standard static interventions since they are ``closer'' to the observational treatment distribution. Of course, incremental interventions can still be useful analysis tools even if not necessarily mimicking realistic treatment changes, as discussed in more detail in the next subsection. }

\textcolor{black}{One trade-off between incremental and more standard interventions is that, by virtue of their dependence on the observational treatment process, incremental interventions will often play a more descriptive rather than prescriptive role. Incremental interventions  allow one to describe how outcomes would vary with gradual changes in treatment intensity; but they are typically less useful for making specific recommendations about optimal treatment.}

Nonetheless incremental interventions do generalize common static and dynamic interventions (both deterministic and stochastic), since they can recover these interventions with particular choices of $\delta_t(\bh_t)$. For example, if positivity holds  then taking the values $\delta=\infty$ and  $\delta=0$ recovers the usual static interventions, yielding potential outcomes $Y^\mathbf{1}$ and $Y^\mathbf{0}$ under exposures  $\overline{A}_T=(1,...,1)$ and $\overline{A}_T=(0,...,0)$, respectively. Thus incremental interventions can also be used for a sensitivity analysis of the positivity assumption. If positivity is violated, then $q_t \rightarrow \one(\pi_t>0)$ for $\delta \rightarrow \infty$, and $q_t \rightarrow \one(\pi_t=1)$ for $\delta \rightarrow 0$;  these are the ``realistic individualized treatment rules'' proposed by \textcite{van2007causal,moore2012causal}, which are dynamic but deterministic. 
Finally, incremental interventions can recover general stochastic dynamic interventions (where $q^*_t$ replaces the propensity score $\pi_t$) by taking $\delta_t =\{ q^*_t / (1-q^*_t) \} / \{ \pi_t / (1-\pi_t) \}$ for some arbitrary $q^*_t$, whenever defined.

\subsection{Identification}
\label{sec:ident}

In the previous section we described incremental propensity score interventions, which are based on shifting the propensity scores $\pi_t$ by multiplying the odds of receiving treatment by $\delta$. We will now give assumptions that allow for identification of the entire marginal distribution of the resulting potential outcomes $Y^{\bQ(\delta)}$, although for simplicity we focus on estimating just the mean of this distribution.

Importantly, identification of incremental intervention effects requires no conditions on the propensity scores $\pi_t$, \textcolor{black}{since propensity scores that equal zero or one are not shifted}. This is different from more common interventions that require propensity scores to be bounded or otherwise restricted in some way. Specifically we only require the following consistency and exchangeability assumptions.

\begin{assumption}[Consistency]
$Y=Y^{\overline{a}_T}$ if $\overline{A}_T = \overline{a}_T$.
\end{assumption}
\begin{assumption}[Exchangeability]
$A_t \ind Y^{\overline{a}_T} \mid \bH_t$.
\end{assumption}

Consistency means observed outcomes equal corresponding potential outcomes under the observed treatment sequence; it would be violated for example in network settings with interference, where outcomes can be affected by other units' treatment assignment. Exchangeability means treatment assignment is essentially randomized within covariate strata; it can hold by design in a trial, but in observational studies it requires sufficiently many relevant adjustment covariates to be collected.  Importantly, no conditions are needed on the propensity score, since fluctuations based on $q_t$ in \eqref{eq:qdef} will leave the propensity score unchanged if it is zero or one. To the best of our knowledge, the only other work that has discussed removing positivity conditions entirely is \textcite{van2007causal,moore2012causal}; however, they utilize different (deterministic) interventions and consider parametric effect models. \textcolor{black}{General interventions could be modified to similarly avoid positivity, by redefining them to not affect subjects with extreme propensity scores. Two benefits of incremental interventions are (i) avoiding positivity occurs naturally and smoothly via the definition of $q_t$, rather than an inserted indicator; and (ii) as discussed shortly, effects under a wide range of treatment intensities can be summarized with a single curve rather than many regime-specific parameters.}

The next theorem shows that the mean counterfactual outcome $\psi(\delta)=\E(Y^{\bQ(\delta)})$ under the incremental intervention is identified and can be expressed uniquely in terms of the observed data distribution $\Pb$. 

\begin{theorem}
\label{thm:ident}
Under Assumptions 1--2, and if $\delta \in \mathcal{D}=[\delta_\ell, \delta_u]$ for $0 < \delta_\ell \leq \delta_u < \infty$, the incremental effect $\psi(\delta)=\E(Y^{\bQ(\delta)})$ equals
$$\psi(\delta) = \!\! \sum_{\overline{a}_T \in \mathcal{A}^T} \int\limits_{\mathcal{X}} \mu(\bh_T,a_T)  \prod_{t=1}^T \frac{a_t \delta \pi_t(\bh_t) + (1-a_t) \{1-\pi_t(\bh_t)\} }{ \delta \pi_t(\bh_t) + 1- \pi_t(\bh_t) } \ d\Pb(\bx_t \mid \bh_{t-1},a_{t-1})   $$
where $\mathcal{X}=\mathcal{X}_1 \times \dots \times \mathcal{X}_T$ and $\mu(\bh_T,a_T)=\E(Y \mid \bH_T=\bh_T,A_T=a_T)$.
\end{theorem}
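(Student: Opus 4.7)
The plan is to first establish a general g-formula identification result for stochastic interventions, and then specialize to the incremental kernel defined in \eqref{eq:qdef}. For the general step, I would decompose $Y^{\bQ} = \sum_{\overline{a}_T \in \mathcal{A}^T} Y^{\overline{a}_T}\,\one(\bQ = \overline{a}_T)$, noting that $\bQ$ is generated by fresh randomness applied to the kernels $q_t$ evaluated at histories $\bH_t^Q = (\overline{\bX}_t, \overline{Q}_{t-1})$, and is therefore independent of the family $\{Y^{\overline{a}_T}\}_{\overline{a}_T}$ given $\overline{\bX}_T$. Taking iterated expectations along the natural time ordering $(\bH_1, Q_1, \bX_2, Q_2, \ldots, \bX_T, Q_T)$, I would use Assumption~2 at step $t$ to swap $\E(Y^{\overline{a}_T} \mid \bH_t) = \E(Y^{\overline{a}_T} \mid \bH_t, A_t = a_t)$ whenever $Q_t$ has been frozen at $a_t$ but not yet integrated out, and then use Assumption~1 at the final step to collapse $\E(Y^{\overline{a}_T} \mid \bH_T, A_T = a_T)$ to $\mu(\bh_T, a_T)$. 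Because the intervention acts only on treatment, $\bX_t$ retains its observational conditional law $d\Pb(\bx_t \mid \bh_{t-1}, a_{t-1})$ evaluated at the interventional treatment history. Assembling the pieces yields
$$\E(Y^{\bQ}) = \sum_{\overline{a}_T \in \mathcal{A}^T} \int \mu(\bh_T, a_T) \prod_{t=1}^T q_t(a_t \mid \bh_t)\, d\Pb(\bx_t \mid \bh_{t-1}, a_{t-1}).$$

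To specialize, I would write the Bernoulli mass function as $q_t(a_t \mid \bh_t) = a_t\, q_t(\bh_t) + (1-a_t)\{1 - q_t(\bh_t)\}$, substitute \eqref{eq:qdef}, and combine over the common denominator $\delta\pi_t(\bh_t) + 1 - \pi_t(\bh_t)$; this produces exactly the weighting factor appearing in the theorem. The restriction $\delta \in [\delta_\ell, \delta_u] \subset (0,\infty)$ ensures the denominator is bounded away from zero, so $q_t(\cdot \mid \bh_t)$ is a well-defined probability mass function on $\mathcal{A}$ for every $\bh_t$.

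The main obstacle is that no positivity assumption is made on $\pi_t$, so the usual g-formula for a fixed regime $\E(Y^{\overline{a}_T})$ can fail to make sense when $\pi_t(\bh_t) \in \{0,1\}$ for some $t$ and some histories, because $\mu(\bh_T, a_T)$ need not be defined there. The key observation is that the incremental kernel automatically assigns zero mass to such branches: when $\pi_t(\bh_t) = 0$ one has $q_t(1 \mid \bh_t) = 0$, and when $\pi_t(\bh_t) = 1$ one has $q_t(0 \mid \bh_t) = 0$. Hence any $(\bh_T, a_T)$ on which $\mu$ is undefined appears multiplied by zero, and under the convention $0 \cdot \text{undefined} = 0$ can be dropped from the sum. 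In effect, the shifted kernel never transports mass to regions outside the observational support, so sequential exchangeability is applied only at treatment values for which $\E(Y^{\overline{a}_T} \mid \bH_t, A_t = a_t)$ is meaningfully defined, and the full identification goes through using only Assumptions~1--2.
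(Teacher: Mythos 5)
Your proposal is correct and follows essentially the same route as the paper: the paper's Appendix proves a general g-formula lemma for stochastic interventions under a weak positivity condition ($d\Pb(a_t\mid\bh_t)=0 \implies dQ_t(a_t\mid\bh_t)=0$) via the same sequential iterated-expectation/exchangeability/consistency argument, and then notes, exactly as you do, that the incremental kernel automatically satisfies this condition because $\pi_t(\bh_t)\in\{0,1\}$ forces the corresponding $q_t$ mass to zero. The only cosmetic difference is that you phrase the recursion via the decomposition $Y^{\bQ}=\sum_{\overline{a}_T}Y^{\overline{a}_T}\one(\bQ=\overline{a}_T)$ while the paper recurses on $\E(Y^{(\overline{a}_{t-1},\underline{\bQ}_t)}\mid\bH_{t-1},A_{t-1})$ directly.
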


Proofs of all theorems are given in the Appendix. Theorem 1 follows from Robins' g-formula \autocite{robins1986new}, replacing the general treatment process under intervention with the proposed incremental intervention $q_t$ indexed by $\delta$. The next corollary shows how the expression for $\psi(\delta)$ simplifies in point exposure studies.

\begin{corollary}
When $T=1$ the identifying expression for $\psi(\delta)$ simplifies to
$$ \psi(\delta) = \E \left[ \frac{ \delta \pi(\bX)  \mu(\bX,1) + \{1-\pi(\bX)\} \mu(\bX,0) }{ \delta \pi(\bX) + \{1 - \pi(\bX)\} } \right]  $$
with $\mu(\bx,a)=\E(Y \mid \bX=\bx, A=a)$. 
\end{corollary}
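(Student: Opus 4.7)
The plan is to prove the corollary as a direct specialization of Theorem 1 to the single-timepoint case $T=1$, writing out each ingredient explicitly and then collapsing the resulting sum-integral into the claimed expectation. No new probabilistic machinery should be needed; the whole proof is bookkeeping.

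First I would substitute $T=1$ into the general identifying expression. In that case the history simplifies to $\bH_1 = \bX_1 = \bX$ (there is no prior treatment), and $\bh_0$ is empty so the measure $d\Pb(\bx_1 \mid \bh_0, a_0)$ reduces to the marginal $d\Pb(\bx)$. The product over $t=1,\ldots,T$ becomes a single factor
$$ \frac{ a\,\delta\,\pi(\bx) + (1-a)\{1-\pi(\bx)\} }{ \delta\,\pi(\bx) + 1 - \pi(\bx) }, $$
and the outer sum ranges over $a \in \{0,1\}$.

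Next I would evaluate that sum term-by-term. The $a=1$ summand contributes $\mu(\bx,1)\,\delta\,\pi(\bx) / \{\delta\,\pi(\bx) + 1 - \pi(\bx)\}$, and the $a=0$ summand contributes $\mu(\bx,0)\{1-\pi(\bx)\} / \{\delta\,\pi(\bx) + 1 - \pi(\bx)\}$. Adding them, since they share a common denominator, yields the single integrand
$$ \frac{ \delta\,\pi(\bx)\,\mu(\bx,1) + \{1-\pi(\bx)\}\,\mu(\bx,0) }{ \delta\,\pi(\bx) + \{1-\pi(\bx)\} }. $$
Finally I would rewrite $\int (\cdot)\, d\Pb(\bx)$ as the expectation $\E[\,\cdot\,]$ over $\bX \sim \Pb$, giving exactly the stated formula.

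There is no real obstacle here; the only minor point worth being careful about is that when $T=1$ the expression $\bH_T = (\overline{\bX}_T, \overline{A}_{T-1})$ degenerates to just $\bX_1$ (the treatment history is empty), so $\mu(\bh_T, a_T) = \E(Y \mid \bX=\bx, A=a)$ as defined in the corollary. Once this identification is made, the rest is algebraic simplification.
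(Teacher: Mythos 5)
Your proposal is correct and matches the paper's (implicit) derivation exactly: the corollary is obtained by specializing Theorem 1 to $T=1$, summing the two terms $a\in\{0,1\}$ over the common denominator $\delta\pi(\bx)+1-\pi(\bx)$, and rewriting the integral against $d\Pb(\bx)$ as an expectation. No further comment is needed.
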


This corollary shows that, when $T=1$, the incremental effect $\psi(\delta)$ is a weighted average of the regression functions $\mu(\bx,1)$ and $\mu(\bx,0)$, where the weight on $\mu(\bx,1)$ is given by the fluctuated intervention propensity score $q(\bx) = \delta \pi(\bx)/\{ \delta \pi(\bx) + 1-\pi(\bx)\}$ (and the weight on $\mu(\bx,0)$ is $1-q(\bx)$). This weight tends to zero as $\delta \rightarrow 0$ (whenever $\pi(\bx)<1$) and tends to one for $\delta \rightarrow \infty$ (whenever $\pi(\bx)>0$), showing again that $\delta$ controls how far away the intervention is from the observational treatment process. Incremental interventions can range from assigning no one to everyone treatment, but also include an infinite middle ground. Note that we can also write $\psi(\delta) = \E\{\mu(\bX,A^*)\}$ where $A^*$ is a simulated version of treatment under the incremental intervention, with $(A^* \mid \bX=\bx) \sim \text{Bernoulli}\{q(\bx)\}$.

\textcolor{black}{Beyond the fact that identifying incremental effects does not require positivity conditions, targeting $\psi(\delta)$ has another crucial advantage: it is always a one-dimensional curve, regardless of the number of timepoints $T$, and even though it characterizes infinitely many interventions nonparametrically. In contrast, for more traditional causal effects, there is a distinct tension between the number of hypothetical interventions studied and the complexity of the effect. For example one could consider the mean outcome $\E(Y^{\overline{a}_T})$ under all $2^T$ deterministic interventions $\overline{a}_T \in \{0,1\}^T$, but this requires exponentially many parameters without further assumptions. One could impose smoothness across the $2^T$ interventions to reduce the parameter space, but this will yield bias if the smoothness assumptions are incorrect. Conversely, describing the mean outcome under a small number of interventions such as $\overline{a}_T = \mathbf{0}$ and $\overline{a}_T=\mathbf{1}$ (i.e., never treated and always treated) requires only a few parameters, but gives a very limited picture of how changing treatment affects outcomes. In contrast, incremental interventions allow  exploration of infinitely many interventions (one for each $\delta \in \mathcal{D}$), without any parametric assumptions, regardless of how large $T$ is, and still only yield a single curve $\psi: \mathcal{D}  \mapsto \R$ that can be easily visualized with a plot. }

\subsection{Efficiency Theory}
\label{sec:eff}

So far we have introduced incremental propensity score interventions, and showed that resulting effects can be identified without requiring positivity assumptions. Now we will develop general efficiency theory for the incremental effect $\psi(\delta)=\E(Y^{\bQ(\delta)})$.

We refer elsewhere \autocite{bickel1993efficient, van2002semiparametric, van2003unified, tsiatis2006semiparametric, kennedy2016semiparametric} for more detailed information about nonparametric efficiency theory, and so give only a brief review here. A fundamental goal is characterizing so-called influence functions, and in particular finding the efficient influence function. These tasks are essential for a number of reasons. Perhaps most importantly, influence functions can be used to construct estimators with very favorable properties, such as double robustness or general second-order bias (called Neyman orthogonality by \textcite{chernozhukov2016double}). Estimators with these properties can attain fast parametric convergence rates, even in nonparametric settings where nuisance functions are estimated at slower rates via flexible machine learning. The efficient influence function (the only influence function in fully nonparametric models) is particularly important since its variance equals the efficiency bound, thus providing an important benchmark and allowing for the construction of optimal estimators. Influence functions are also critical for understanding the asymptotics of corresponding estimators, since by definition any regular asymptotically linear estimator can be expressed as the empirical average of an influence function plus a negligible $o_p(1/\sqrt{n})$ error term.

Mathematically, influence functions are essentially derivatives. More specifically, viewed as elements of the Hilbert space of mean-zero finite-variance functions, influence functions are those elements whose covariance with parametric submodel scores equals a pathwise derivative of the target parameter. Influence functions also correspond to the derivative in a Von Mises expansion of the target parameter (a distributional analog of a Taylor expansion), and in nonparametric models with discrete support they are a Gateaux derivative of the parameter in the direction of a point mass contamination. 

The result of the next theorem is an expression for the efficient influence function for the incremental effect $\psi(\delta)$ under a nonparametric model, which allows the data-generating process $\Pb$ to be infinite-dimensional. This efficient influence function can be used to characterize the efficiency bound for estimating $\psi(\delta)$, and we will see how this bound changes in randomized trial settings where the propensity scores are known. Then in the next section the efficient influence function will be used to construct estimators, including optimally efficient estimators with the second-order bias property discussed earlier.

\begin{theorem}
\label{thm:eif}
The efficient influence function for $\psi(\delta)$ under a nonparametric model (with unknown propensity scores) is given by
\begin{align*}
\sum_{t=1}^T  &  \left[ \frac{A_t \{1-\pi_t(\bH_t)\} - (1-A_t) \delta \pi_t(\bH_t) }{ \delta/(1-\delta)} \right] \left[ \frac{ \delta \pi_t(\bH_t) m_t(\bH_t,1) + \{1-\pi_t(\bH_t)\} m_t(\bH_t,0) } { \delta \pi_t(\bH_t) + 1 - \pi_t(\bH_t) } \right] \\
& \hspace{.1in} \times \left\{ \prod_{s=1}^{t} \frac{ (\delta A_s + 1-A_s) }{ \delta \pi_s(\bH_s) + 1 - \pi_s(\bH_s) } \right\} + \prod_{t=1}^{T} \frac{ (\delta A_t + 1-A_t) Y }{ \delta \pi_t(\bH_t) + 1 - \pi_t(\bH_t) }  - \psi(\delta)
\end{align*}
where for $t=0,...,T-1$ we define 
\begin{align*} m_t(\bh_t,a_t) &=  \int_{\mathcal{R}_t} \mu(\bh_T,a_T) \prod_{s={t+1}}^T  \frac{a_s \delta \pi_s(\bh_s) + (1-a_s) \{1-\pi_s(\bh_s)\} }{ \delta \pi_s(\bh_s) + 1- \pi_s(\bh_s) } \ d\Pb(\bx_{s} \mid \bh_{s-1},a_{s-1}) 
\end{align*}
with $\mathcal{R}_t= (\mathcal{H}_T \times \mathcal{A}_T) \setminus \mathcal{H}_t$, and for $t=T$ we let $m_T(\bh_T,a_T)=\mu(\bh_T,a_T)$.
\end{theorem}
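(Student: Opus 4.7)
The plan is to build the efficient influence function starting from the IPW representation $\psi(\delta) = \E\{W_\delta Y\}$ implied by Theorem~\ref{thm:ident}, where $W_\delta = \prod_{t=1}^T R_t$ with $R_t = (\delta A_t + 1 - A_t)/\{\delta\pi_t(\bH_t) + 1 - \pi_t(\bH_t)\}$, and then adding augmentation terms $\Phi_t$ to account for the unknown propensity scores. The IPW identity follows from Theorem~\ref{thm:ident} by recognising the ratio $(\delta a + 1 - a)/\{\delta\pi_t(\bh_t) + 1 - \pi_t(\bh_t)\}$ as $q_t(a\mid\bh_t)/\{\pi_t(\bh_t)^{a}(1-\pi_t(\bh_t))^{1-a}\}$ and converting the g-formula into an expectation under $\Pb$. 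The strategy is to show that $W_\delta Y - \psi(\delta)$ already satisfies the pathwise derivative equation $\E\{\varphi S\} = d\psi_\epsilon(\delta)/d\epsilon\big|_{\epsilon=0}$ for scores $S$ in the tangent subspaces $T_Y$ (of $Y\mid\bH_T,A_T$) and $T_{X_t}$ (of $\bX_t\mid\bH_{t-1},A_{t-1}$), but misses a piece in the $T_{A_t}$ subspaces (of $A_t\mid\bH_t$); the augmentations $\Phi_t \in T_{A_t}$ fill in that missing piece without disturbing the other directions, because the sequential factorization makes $T_{A_t}$ orthogonal to $T_Y$, to each $T_{X_s}$, and to $T_{A_s}$ for $s\neq t$.

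To derive $\Phi_t$ I would consider a regular submodel $\{\Pb_\epsilon\}$ in which only $\pi_t$ is perturbed, with score $S_{A_t} = (A_t-\pi_t(\bH_t))g(\bH_t)/\{\pi_t(\bH_t)(1-\pi_t(\bH_t))\}$. Differentiating $\psi_\epsilon(\delta) = \E_{\Pb_\epsilon}\{W_\delta(\pi_\epsilon)Y\}$ yields two pieces: (a) from the chain rule on $W_\delta$ using $\partial W_\delta/\partial \pi_t = W_\delta(1-\delta)/\{\delta\pi_t+1-\pi_t\}$, and (b) from $S_{A_t}$ scoring the observational law of $A_t$. Piece (b) equals $\E\{(W_\delta Y - \psi(\delta))S_{A_t}\}$, so the IPW candidate absorbs it. Piece (a), after applying the tower identity $\E\{W_\delta Y\mid\bH_t\} = \tilde m_t(\bH_t)\prod_{s<t}R_s$---where $\tilde m_t(\bh_t) = \{\delta\pi_t(\bh_t)m_t(\bh_t,1) + (1-\pi_t(\bh_t))m_t(\bh_t,0)\}/\{\delta\pi_t(\bh_t)+1-\pi_t(\bh_t)\}$ is precisely the second bracketed factor in the theorem's statement---equals $\E\{h(\bH_t)(1-\delta)\tilde m_t(\bH_t)\prod_{s<t}R_s/\{\delta\pi_t(\bH_t)+1-\pi_t(\bH_t)\}\}$, which is matched by $\E\{\Phi_t S_{A_t}\}$ for the Bernoulli-type augmentation $\Phi_t = (A_t-\pi_t(\bH_t))(1-\delta)\tilde m_t(\bH_t)\prod_{s<t}R_s/\{\delta\pi_t(\bH_t)+1-\pi_t(\bH_t)\}$.

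Finally, the pointwise algebraic identity $A_t - \pi_t(\bH_t) = [A_t\{1-\pi_t(\bH_t)\}-(1-A_t)\delta\pi_t(\bH_t)](\delta A_t+1-A_t)/\delta$ (easily verified on $A_t\in\{0,1\}$) rewrites each $\Phi_t$ into the $t$-th summand displayed in the theorem, with prefactor $[A_t\{1-\pi_t(\bH_t)\}-(1-A_t)\delta\pi_t(\bH_t)]/\{\delta/(1-\delta)\}$ and cumulative weight $\prod_{s\leq t}R_s$. The resulting $\varphi = W_\delta Y - \psi(\delta) + \sum_t \Phi_t$ is mean zero in $L_2(\Pb)$ and satisfies the pathwise derivative equation in every direction; since the nonparametric tangent space is all of $L_2^0(\Pb)$, $\varphi$ is the unique, and hence efficient, influence function. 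The main obstacle is the $(a)+(b)$ decomposition: $\pi_t$ enters $\psi(\delta)$ in two places at once---through the observational law of $A_t$ (via $S_{A_t}$) and through the intervention weight $q_t$ that appears in every $m_s$ with $s\leq t$---so carefully separating the chain-rule contribution (a) from the score contribution (b) and collapsing them via the tower identity into a single sum over $t$, rather than the double sum over $(s,t)$ that a naive pathwise-differentiation would give, is the step that requires the most care.
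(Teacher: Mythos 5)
Your proposal is correct, and the algebra checks out: the identity $A_t-\pi_t(\bH_t)=[A_t\{1-\pi_t(\bH_t)\}-(1-A_t)\delta\pi_t(\bH_t)](\delta A_t+1-A_t)/\delta$ holds on $A_t\in\{0,1\}$, the tower identity $\E\{W_\delta Y\mid \bH_t\}=\tilde m_t(\bH_t)\prod_{s<t}R_s$ follows from exactly the backward recursion of Remark 2, and your $\Phi_t$ reproduces the $t$-th summand of the theorem. However, your route is genuinely different from the paper's. The paper proceeds through three general lemmas: Lemma \ref{lem:eif_qknown} gives the efficient influence function for an arbitrary stochastic intervention $\bQ$ \emph{not} depending on $\Pb$ (whose augmentation terms are organized as residuals $\int m_{t+1}\,dQ_{t+1}-m_t$ living in the covariate directions $T_{X_{t+1}}$, mirroring the usual g-formula structure); Lemma \ref{lem:contribution} gives a general chain-rule correction when $\bQ$ depends on $\Pb$, expressed through the influence function $\phi_t$ of $dQ_t(a_t\mid\bh_t)$; and Lemma \ref{lem:eif_contribution} computes $\phi_t$ for the incremental $q_t$. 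Theorem \ref{thm:eif} is then assembled from these pieces, and the paper explicitly omits the proofs of the first two lemmas as ``lengthy and not particularly illuminating.'' You instead start from the IPW representation $\psi(\delta)=\E\{W_\delta Y\}$, decompose the nonparametric tangent space as $\bigoplus_t T_{X_t}\oplus\bigoplus_t T_{A_t}\oplus T_Y$, observe that $W_\delta Y-\psi(\delta)$ already matches the pathwise derivative in the $T_Y$ and $T_{X_t}$ directions, and place all corrections in the $T_{A_t}$ directions, where the piece-(a)/piece-(b) split cleanly separates the roles of $\pi_t$ as intervention ingredient versus observational factor. The two groupings of terms differ but must agree in the end since the influence function is unique in a nonparametric model. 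What the paper's route buys is generality (it covers arbitrary $\Pb$-dependent stochastic interventions, which the authors explicitly want for other applications) and reusable second-order remainder lemmas for the convergence proof; what your route buys is a self-contained, directly verifiable derivation for the incremental case that actually supplies the verification the paper defers to future work. The only things to tighten are the minor notational slip ($h$ versus $g$ for the perturbation direction) and an explicit statement that the submodel scores $g(\bH_t)$ range over a dense enough set that matching the derivative for all such $g$ pins down the $T_{A_t}$ component.
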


We give a proof of Theorem \ref{thm:eif} in Section \ref{sec:proof_eif} of the Appendix, by way of deriving the efficient influence function for general stochastic interventions with treatment distributions that depend on the observational propensity scores. To the best of our knowledge this result has not yet appeared in the literature, and will be useful for general stochastic interventions beyond those with the incremental form proposed here, \textcolor{black}{regardless of whether they depend on the observational treatment process or not. Our result recovers previously proposed influence functions for other stochastic intervention effects in the $T=1$ setting as special cases \autocite{diaz2012population, haneuse2013estimation},  and could be used to generalize this work to the multiple timepoint setting. Further, our result can also be used to construct the efficient influence function and corresponding estimator for other stochastic intervention effects, for which there are currently only likelihood-based and weighting estimators available \autocite{moore2012causal,young2014identification}.} 

The structure of the efficient influence function in Theorem \ref{thm:eif} is somewhat similar to that of more standard effect parameters, in the sense that it consists of an inverse-probability-weighted term (the rightmost product term in the second line) as well as an augmentation term. However the particular form of the weighted and augmentation terms are quite different from those that appear in more common causal and missing data problems. We discuss the weighted term in more detail in Section \ref{sec:simple}, when we introduce an inverse-probability-weighted estimator for $\psi(\delta)$. The augmentation term involves the functions $m_t$, which  can be viewed as marginalized versions of the full regression function $\mu(\bh_t,a_t)$ that conditions on all of the past (with smaller values of $t$ coinciding with more marginalization). 

Note that for notational simplicity we drop the dependence of $m_t$ on $\delta$ and $(\pi_{t+1},...,\pi_T)$, as well as on the conditional densities of the covariates $(\bX_{t+1},...,\bX_T)$. Importantly, the pseudo-regression functions $m_t$ also have a recursive sequential regression formulation, as displayed in the subsequent remark. 

\begin{remark}
The functions $m_t$ can be equivalently expressed recursively as
$$ m_{t-1}(\bH_{t-1},A_{t-1}) = \E\left[  \frac{ \delta \pi_{t}(\bH_{t}) m_{t}(\bH_{t},1) + \{1-\pi_{t}(\bH_{t})\} m_{t}(\bH_{t},0) }{ \delta \pi_{t}(\bH_{t}) + 1- \pi_{t}(\bH_{t}) } \Bigm| \bH_{t-1}, A_{t-1} \right] $$
for  $t=1,...,T$ and $m_T(\bh_T,a_T)=\mu(\bh_T,a_T)$ as before.
\end{remark}

Viewing the $m_t$ functions in the above sequential regression form is very practically useful for the purposes of estimation. Specifically it shows how to bypass conditional density estimation, and instead construct estimates $\hat{m}_t$ using regression methods that are more commonly found in statistical software. 

It is also important to note that the pseudo-regressions $m_t$ depend on the observational treatment process; this is not the case for analogous influence function terms for more common parameters like $\E(Y^{\overline{a}_T})$. This is due to the fact that the functional $\psi(\delta)$ itself depends on the observational treatment process, which means for example that double robustness is not possible (though second-order bias still is) and that the efficiency bound is different when the propensity scores are known versus unknown. The issue of double robustness is discussed in more detail in Section \ref{sec:clt}. In Lemmas \ref{lem:eif_qknown} and \ref{lem:eif_contribution} in the Appendix we give the efficient influence function when the propensity scores are known, as well as a specific expression for the contribution that comes from the scores being unknown, both for general (possibly non-incremental) stochastic interventions. 

In the next corollary we give the efficient influence function for the incremental effect in a single timepoint study, which has a simpler and more intuitive form.

\begin{corollary}
When $T=1$ the efficient influence function for $\psi(\delta)$ simplifies to
\begin{align*}
&\frac{ \delta \pi(\bX)  \phi_1(\bZ) + \{1-\pi(\bX)\} \phi_0(\bZ) }{ \delta \pi(\bX) + \{1 - \pi(\bX)\} } + \frac{ \delta \gamma(\bX) \{ A - \pi(\bX)\} }{ \{ \delta \pi(\bX) + 1 - \pi(\bX) \}^2 }  - \psi(\delta)
\end{align*}
where $\gamma(\bx)=\mu(\bx,1)-\mu(\bx,0)$ and 
$$\phi_a(\bZ)=\frac{\one(A=a)}{\pi(a \mid \bX)} \{Y-\mu(\bX,a)\} + \mu(\bX,a)$$ 
is the uncentered efficient influence function for the parameter $\E\{\phi_a(\bZ)\} = \E\{\mu(\bX,a)\}$.
\end{corollary}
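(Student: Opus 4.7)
My plan is to deduce the corollary directly from Theorem \ref{thm:eif} by setting $T=1$ and simplifying algebraically; since the general formula is messy, the real work is bookkeeping rather than analysis. Throughout, let $\pi=\pi(\bX)$, $\mu_a=\mu(\bX,a)$, $D=\delta\pi+1-\pi$, and $W=\delta A+1-A$. By definition $m_T=\mu$, so $m_1(\bH_1,a_1)=\mu(\bX,a_1)$. With $T=1$ the outer sum has only the $t=1$ term, and the product over $s$ runs over $s=1$ alone, so the EIF from Theorem \ref{thm:eif} becomes
\begin{align*}
\frac{A(1-\pi)-(1-A)\delta\pi}{\delta/(1-\delta)}\cdot\frac{\delta\pi\mu_1+(1-\pi)\mu_0}{D}\cdot\frac{W}{D}+\frac{WY}{D}-\psi(\delta).
\end{align*}

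The key simplification is the binary-treatment identity $[A(1-\pi)-(1-A)\delta\pi]\,W=\delta(A-\pi)$, checked by plugging in $A=0$ and $A=1$ separately. Substituting, the augmentation factor collapses and the display reduces to
\begin{align*}
\frac{WY}{D}+\frac{(1-\delta)(A-\pi)\bigl[\delta\pi\mu_1+(1-\pi)\mu_0\bigr]}{D^2}-\psi(\delta).
\end{align*}

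Next I would rewrite the corollary's target in comparable form. Expanding the definitions of $\phi_1,\phi_0$ and using that $A\in\{0,1\}$ gives
\begin{align*}
\delta\pi\,\phi_1(\bZ)+(1-\pi)\,\phi_0(\bZ)=WY-(A-\pi)(\delta\mu_1-\mu_0).
\end{align*}
Dividing by $D$ and adding $\delta\gamma(A-\pi)/D^2$ (with $\gamma=\mu_1-\mu_0$), matching the two displays reduces, after cancelling the common $WY/D$ and the common factor $(A-\pi)/D^2$, to the scalar identity
\begin{align*}
-(\delta\mu_1-\mu_0)D+\delta(\mu_1-\mu_0)=(1-\delta)\bigl[\delta\pi\mu_1+(1-\pi)\mu_0\bigr],
\end{align*}
which is verified by expanding $D=\delta\pi+1-\pi$ and collecting terms: both sides equal $(1-\delta)\delta\pi\mu_1+(1-\delta)(1-\pi)\mu_0$.

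The only real obstacle is keeping the algebra clean: tracking the factor $(1-\delta)/\delta$ from the leading denominator and combining it with the two $1/D$ factors produced by the $m_t$ and IPW weight pieces. No new probabilistic input is needed beyond Theorem \ref{thm:eif}, the binary identity above, and the final elementary scalar identity.
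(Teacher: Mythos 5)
Your derivation is correct: the paper gives no separate proof of this corollary (it is left as a direct specialization of Theorem \ref{thm:eif} to $T=1$), and your computation---the identity $[A\{1-\pi\}-(1-A)\delta\pi](\delta A+1-A)=\delta(A-\pi)$, the rewriting $\delta\pi\phi_1+(1-\pi)\phi_0 = (\delta A+1-A)Y-(A-\pi)(\delta\mu_1-\mu_0)$, and the final scalar identity---is exactly the intended bookkeeping and checks out. No gaps.
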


The efficient influence function in the $T=1$ case is therefore a simple weighted average of the influence functions for $\E(Y^1)$ and $\E(Y^0)$, plus a contribution that comes from the fact that the propensity score is unknown and must be estimated. If the propensity scores were known, the efficient influence function would just be the first weighted average term in Corollary 2. As will be discussed in more detail in the next section, estimating the influence function in the $T=1$ case is straightforward as it only depends on the regression function $\mu$ and propensity score $\pi$ (rather than the sequential psuedo-regression functions $m_t$ that appear in the longitudinal setting).

\section{Estimation \& Inference}
\label{sec:estimation}

In this section we develop estimators for the proposed incremental effect $\psi(\delta)$. We focus our analysis on flexible sample-splitting estimators that allow arbitrarily complex nuisance estimation, e.g., via high-dimensional regression and machine learning methods; however we also discuss simpler estimators that rely on empirical process conditions to justify full-sample nuisance estimation. In particular we show that there exists an inverse-probability-weighted estimator of the incremental effect that is especially easy to compute. We go on to describe the asymptotic behavior of our proposed estimators, both from a pointwise perspective and uniformly across a continuum of increment parameter $\delta$ values. Finally we propose a computationally efficient multiplier-bootstrap approach for constructing uniform confidence bands across $\delta$, and use it to develop a novel test of no treatment effect.

\subsection{Simple Estimators}
\label{sec:simple}

We first describe various simple estimators of the incremental effect, which provide some intuition for the main estimator we propose in the next section. The simple inverse-probability-weighted estimator discussed here might be preferred if the propensity scores can be modeled well (e.g., in a randomized trial) and computation comes at a high cost.

Let $\varphi(\bZ; \boldsymbol\eta, \delta)$ denote the (uncentered) efficient influence function from Theorem 2, which is a function of the observations $\bZ$ and the nuisance functions
$$ \boldsymbol\eta = (\boldsymbol\pi, \textbf{m}) = (\pi_1,...,\pi_T, m_1,...,m_T) . $$
By uncentered we mean that $\varphi(\bZ;\boldsymbol\eta,\delta)$ equals the quantity displayed in Theorem 2 plus the parameter $\psi(\delta)$, so that $\E\{\varphi(\bZ;\boldsymbol\eta,\delta)\}=\psi(\delta)$ by construction.

If one is willing to rely on appropriate empirical process conditions (e.g., Donsker-type or low entropy conditions, as discussed by \textcite{van1996weak}, \textcite{van2000asymptotic}, and others) then a natural estimator would be given by the solution to the efficient influence function estimating equation, i.e., the Z-estimator
$$ \hat\psi^*(\delta) = \Pn\{ \varphi(\bZ;\hat{\boldsymbol\eta}, \delta) \} $$
where $\hat{\boldsymbol\eta}$ are some initial estimators of the nuisance functions, and $\Pn$ denotes the empirical measure so that sample averages can be written as $\frac{1}{n} \sum_i f(\bZ_i) = \Pn\{f(\bZ)\} =  \int f(\bz)  \ d\Pn(\bz)$. An algorithm describing how to compute the estimator $\hat\psi^*(\delta)$ is given in Section \ref{sec:algorithm} of the Appendix. As a special case, if the propensity scores $\pi_t$ can be correctly modeled parametrically (e.g., when they are known as in a randomized trial) then one could use the simple inverse-probability-weighted estimator given by
$$ \hat\psi_{ipw}^*(\delta) = \Pn\left\{ \prod_{t=1}^T \frac{( \delta A_t + 1-A_t) Y}{\delta \hat\pi_t(\bH_t) + 1-\hat\pi_t(\bH_t)} \right\} . $$
This estimator can be computed very quickly, as it only requires fitting a single pooled regression to estimate $\pi_t$ and then taking a weighted average. However it has disadvantages, as will be discussed shortly.  Also note that it is a special case of $\hat\psi^*(\delta)$ that sets $\hat{m}_t=0$. 

It is instructive to compare the inverse-probability-weighted estimator above to that for a usual deterministic static intervention effect like $\E(Y^{\overline{a}_T})$. For example, the inverse-probability-weighted estimator of the quantity $\E(Y^\mathbf{1})$ weights each always-treated unit by the (inverse) product of propensity scores $\prod_t \hat\pi_t$, and otherwise assigns zero weight. In contrast, when $\delta>1$ the estimator $\hat\psi_{ipw}^*(\delta)$ weights each treated time by the (inverse of the) propensity score plus some fractional contribution of its complement, i.e., $\hat\pi_t + (1-\hat\pi_t)/\delta$, where the size of the contribution decreases with $\delta$; untreated times are weighted by this same amount, except the entire weight is further downweighted by a factor of $\delta$. Therefore when $\delta$ is very large, the two inverse-probability-weighted estimators coincide. However, for cases when $\delta$ is not very large, this also indicates why the estimator $\hat\psi_{ipw}^*(\delta)$ is immune to extreme weights: even if $\hat\pi_t$ is very small, there will still be a contribution to the weight that moves it away from zero.

\subsection{Proposed Estimator}
\label{sec:proposed}

Although the estimators presented in the previous section are relatively simple, they have some disadvantages. First, the inverse-probability-weighted estimator $\hat\psi_{ipw}^*(\delta)$ will in general not be $\sqrt{n}$-consistent unless all the propensity scores are estimated with correctly specified parametric models; this is typically an unreasonable assumption outside of randomized trials where propensity scores are known. In point exposure studies with a single timepoint, (saturated) parametric models might be used if the adjustment covariates are low-dimensional. However, in studies with more than just a few timepoints, the histories $\bH_t$ can easily be high-dimensional even if the covariates $\bX_t$ are low-dimensional, making parametric modeling assumptions less tenable even in the low-dimensional $\bX_t$ case.

In contrast, the more general Z-estimator $\hat\psi^*(\delta)$ can converge at fast parametric $\sqrt{n}$ rates (and attain the efficiency bound from Section \ref{sec:eff}), even when the propensity scores $\pi_t$ and pseudo-outcome regressions $m_t$ are modeled flexibly and estimated at rates slower than $\sqrt{n}$, as long as these nuisance functions are estimated consistently at rates faster than $n^{1/4}$. Lowering the bar from $\sqrt{n}$ to $n^{1/4}$ for the nuisance estimator convergence rate allows much more flexible nonparametric methods to be employed; for example these rates are attainable under smoothness, sparsity, or other nonparametric structural constraints. However, as mentioned earlier, these Z-estimator properties require some empirical process conditions that restrict the flexibility and complexity of the nuisance estimators. This is essentially because $\hat\psi^*(\delta)$ uses the sample twice, once for estimating the nuisance functions $\boldsymbol\eta$ and again for evaluating the influence function $\varphi$. Without restricting the entropy of the nuisance estimators, using the full sample in this way can result in overfitting and intractable asymptotics. Unfortunately, the required empirical process conditions may not be satisfied by many modern regression  methods, such as random forests, boosting, deep learning, or complicated ensembles. 

In order to accommodate the added complexity of these modern machine learning tools, we use sample splitting \autocite{zheng2010asymptotic, chernozhukov2016double}. This avoids the problematic ``double'' use of the sample and, as will be seen in the next section, yields asymptotically normal and efficient estimators without any restrictions on the complexity of the nuisance estimators (however, $n^{1/4}$-type rate conditions are still required).

Therefore we randomly split the observations $(\bZ_1,...,\bZ_n)$ into $K$ disjoint groups, using a random variable $S$ drawn independently of the data, where $S_i \in \{1,...,K\}$ denotes the group membership for unit $i$. Then our proposed estimator is given by
$$ \hat\psi(\delta) = \frac{1}{K} \sum_{k=1}^K \Pn^k \{ \varphi(\bZ;\hat{\boldsymbol\eta}_{\text{-}k}, \delta) \} = \Pn \{ \varphi(\bZ; \hat{\boldsymbol\eta}_{\text{-}S}, \delta)\}$$
where we let $\Pn^k$ denote empirical averages only over the set of units $\{i : S_i = k\}$ in group $k$ (i.e., $\Pn\{f(\bZ)\} = \sum_i f(\bZ_i) \one(S_i=k) / \sum_i \one(S_i=k) $), and we let $\hat{\boldsymbol\eta}_{\text{-}k}$ denote the nuisance estimator constructed excluding group $k$, i.e., only using those units $\{i : S_i \neq k\}$ in groups $\mathcal{K} \setminus k$.  It is hoped that $\hat{\boldsymbol\eta}_{\text{-}k}$ is a rate-optimal estimator of the nuisance functions, for example constructed using kernels, splines, penalized regression, boosting, random forests, etc., or some ensemble-based combination.

An algorithm detailing exactly how to compute the estimator $\hat\psi(\delta)$ is given as follows. For reference, the algorithm for the non-sample splitting estimator $\hat\psi^*(\delta)$ is also given in Section \ref{sec:algorithm} of the Appendix and contains the main ideas.

\begin{algorithm} For each $\delta$ and $k$, letting $\bD_0=\{\bZ_i : S_i \neq k\}$ and $\bD_1=\{\bZ_i : S_i=k\}$ denote corresponding training and test data, respectively, and $\bD=\bD_0 \cup \bD_1$:
\begin{enumerate} 
\item Regress $A_t$ on $\bH_t$ in $\bD_0$, obtain predicted values $\hat\pi_t(\bH_t)$ for each subject/time in $\bD$.
\item Construct time-dependent weights $W_t = \frac{\delta A_t + 1-A_t}{\delta \hat\pi_t(\bH_t) + 1-\hat\pi_t(\bH_t)}$ in $\bD_1$ for each subject/time.
\item Calculate cumulative product weight $\widetilde{W}_t = \prod_{s=1}^t W_s$ in $\bD_1$ for each subject/time.
\item For each time $t=T, T-1, ..., 1$ (starting with $R_{T+1}=Y$):
\begin{enumerate}
\item Regress $R_{t+1}$ on $(\bH_t,A_t)$ in $\bD_0$, obtain predictions $\hat{m}_t(\bH_t,1)$, $\hat{m}_t(\bH_t,0)$ in $\bD$.
\item Construct pseudo-outcome $R_t=\frac{ \delta \hat\pi_t(\bH_t) \hat{m}_t(\bH_t,1) + \{1- \hat\pi_t(\bH_t)\} \hat{m}_t(\bH_t,0) } { \delta \hat\pi_t(\bH_t) + 1 - \hat\pi_t(\bH_t) }$ in $\bD$.
\end{enumerate}
\item Compute time-dependent weights $V_t = \frac{A_t \{1- \hat\pi_t(\bH_t)\} - (1-A_t) \delta \hat\pi_t(\bH_t) }{ \delta/(1-\delta)}$ in $\bD_1$.
\item Compute $\varphi = \widetilde{W}_T Y + \sum_t \widetilde{W}_t V_t R_t $ in $\bD_1$ and define $\hat\psi_k(\delta)$ to be its average in $\bD_1$.
\end{enumerate}
Finally, set $\hat\psi(\delta)$ to be the average of the $K$ estimators $\hat\psi_k(\delta)$, $k=1,...,K$.
\end{algorithm}

Importantly, computing $\hat\psi(\delta)$ only requires estimating regression functions (e.g., using random forests) and not conditional densities, due to the recursive regression formulation of the functions $m_t$ in Remark 2. Although the process can be somewhat computationally expensive depending on the number of timepoints $T$, sample size $n$, and grid density for $\delta$, it is easily parallelizable due to the sample splitting. For a single timepoint all estimators are easy and fast to compute. In Section \ref{sec:code} of the Appendix, we provide a user-friendly R function for general use in cross-sectional or longitudinal studies; \textcolor{black}{the function can also be found in the} \verb|npcausal| \textcolor{black}{R package available at GitHub (github.com/ehkennedy/npcausal).}

\subsection{Weak Convergence}
\label{sec:clt}

In this section we detail the main large-sample property of our proposed estimator, that $\hat\psi(\delta)$ is $\sqrt{n}$-consistent and asymptotically normal under weak conditions (mostly only requiring that the nuisance functions are estimated at faster than $n^{1/4}$ rates). This result holds both pointwise for a given $\delta$, and uniformly in the sense that, after scaling and when viewed as a random function on $\mathcal{D}=[\delta_\ell,\delta_u]$, the estimator converges in distribution to a Gaussian process. The latter fact is crucial for developing uniform confidence bands, as well as the test of no treatment effect we present in the next section. \textcolor{black}{Importantly, the estimator attains fast $\sqrt{n}$ rates even under nonparametric assumptions and even though the target parameter is a curve; this is often not possible \autocite{kennedy2016robust, kennedy2017nonparametric}. }

In what follows we denote the squared $L_2(\Pb)$ norm by $\|f\|^2 = {\int f(\bz)^2 \ d\Pb(\bz)}$. When necessary, we depart slightly from previous sections and index the pseudo-regression functions $m_{t,\delta}$ (and their estimators $\hat{m}_{t,\delta}$) by both time $t$ and the increment parameter $\delta$. The next result lays the foundation for our proposed inferential and testing procedures.

\begin{theorem}
\label{thm:unif_clt}
Let $\hat\sigma^2(\delta) = \Pn[\{\varphi(\bZ;\hat{\boldsymbol\eta}_{\text{-}S},\delta)-\hat\psi(\delta)\}^2]$ denote the estimator of the variance function $\sigma^2(\delta) = \E[\{\varphi(\bZ;\boldsymbol\eta,\delta)-\psi(\delta)\}^2]$. Assume: 
\begin{enumerate}
\item The set $\mathcal{D} =[\delta_\ell,\delta_u]$ is bounded with $0<\delta_\ell \leq \delta_u < \infty$.
\item $\Pb\{ |m_t(\bH_t,A_t) | \leq C\} =\Pb\{ |\hat{m}_t(\bH_t,A_t) | \leq C\} = 1$ for some $C < \infty$ and all $t$.
\item $\sup_{\delta \in \mathcal{D}} | \frac{\hat\sigma(\delta)}{\sigma(\delta)} - 1| = o_\Pb(1)$, and $\| \sup_{\delta \in \mathcal{D}} | \varphi(\bz; \hat{\boldsymbol\eta},\delta) - \varphi(\bz; \boldsymbol\eta,\delta)| \ \! \| = o_\Pb(1)$.
\item $\Big( \sup_{\delta \in \mathcal{D}}  \| \hat{m}_{t,\delta} - m_{t,\delta} \| + \| \hat\pi_t - \pi_t \| \Big) \| \hat\pi_s - \pi_s \| = o_\Pb(1/\sqrt{n})$ for $s \leq t \leq T$.
\end{enumerate}
Then
$$  \frac{ \hat\psi(\delta) - \psi(\delta) }{\hat\sigma(\delta) / \sqrt{n} }  \indist  \Gb(\delta)  $$
in $\ell^\infty(\mathcal{D})$, where $\Gb(\cdot)$ is a mean-zero Gaussian process with covariance  $\E\{\Gb(\delta_1) \Gb(\delta_2)\} = \E\{ \widetilde\varphi(\bZ;\boldsymbol\eta,\delta_1) \widetilde\varphi(\bZ;\boldsymbol\eta,\delta_2) \}$ and $\widetilde\varphi(\bz;\boldsymbol\eta,\delta)=\{\varphi(\bz;\boldsymbol\eta,\delta)-\psi(\delta)\}/\sigma(\delta)$.
\end{theorem}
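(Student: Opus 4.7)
The plan is to use the standard sample-splitting decomposition to reduce the problem to a Donsker empirical process evaluated at the true nuisance $\boldsymbol\eta$, with two remainder terms absorbed into $o_\Pb(1/\sqrt{n})$ uniformly in $\delta$. Concretely, I would write
\[
\hat\psi(\delta)-\psi(\delta) = (\Pn-\Pb)\{\varphi(\bZ;\boldsymbol\eta,\delta)\} + (\Pn-\Pb)\{\varphi(\bZ;\hat{\boldsymbol\eta}_{\text{-}S},\delta)-\varphi(\bZ;\boldsymbol\eta,\delta)\} + \Pb\{\varphi(\bZ;\hat{\boldsymbol\eta}_{\text{-}S},\delta)-\varphi(\bZ;\boldsymbol\eta,\delta)\},
\]
using $\Pb\{\varphi(\bZ;\boldsymbol\eta,\delta)\}=\psi(\delta)$, then treat the three terms separately and uniformly in $\delta\in\mathcal{D}$.

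For the cross term (the middle summand), the sample-splitting structure is essential: conditioning on the $K-1$ training folds makes $\hat{\boldsymbol\eta}_{\text{-}S}$ nonrandom on the held-out fold, so a conditional Chebyshev-type maximal inequality shows its supremum over $\delta$ is $O_\Pb(1/\sqrt{n})$ times the envelope norm $\| \sup_{\delta} |\varphi(\bZ;\hat{\boldsymbol\eta},\delta)-\varphi(\bZ;\boldsymbol\eta,\delta)|\|$, which is $o_\Pb(1)$ by condition 3; smoothness of $\varphi$ in the one-dimensional parameter $\delta$ on the compact interval $\mathcal{D}$ lets me upgrade the pointwise bound to uniform via a standard entropy estimate. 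For the leading term, I would verify that $\mathcal{F}=\{\varphi(\cdot;\boldsymbol\eta,\delta):\delta\in\mathcal{D}\}$ is $\Pb$-Donsker. The key observation is that incremental interventions automatically avoid positivity problems: every denominator $\delta\pi_t(\bH_t)+1-\pi_t(\bH_t)$ is bounded below by $\min(\delta_\ell,1)>0$ for any $\pi_t\in[0,1]$, so together with the $m_t$ boundedness in condition 2 and the $Y$ boundedness implicit through condition 3, $\varphi(\bz;\boldsymbol\eta,\delta)$ is uniformly bounded and Lipschitz in $\delta$; a one-parameter Lipschitz family on a bounded interval has bracketing entropy $O(\log(1/\epsilon))$, hence is Donsker. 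This yields weak convergence of $\sqrt{n}(\Pn-\Pb)\{\varphi(\cdot;\boldsymbol\eta,\cdot)\}$ in $\ell^\infty(\mathcal{D})$ to a tight mean-zero Gaussian process with covariance $\E\{\varphi(\bZ;\boldsymbol\eta,\delta_1)\varphi(\bZ;\boldsymbol\eta,\delta_2)\}-\psi(\delta_1)\psi(\delta_2)$.

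The main obstacle is the drift term $\Pb\{\varphi(\bZ;\hat{\boldsymbol\eta}_{\text{-}S},\delta)-\varphi(\bZ;\boldsymbol\eta,\delta)\}$, which is where the second-order structure of the EIF must be exploited. I would perform a von-Mises-style expansion adapted to the longitudinal incremental setting: iterated applications of the tower property, the recursive identity for $m_{t,\delta}$ from Remark 2, and integration against the intervention density $q_t$ should collapse the first-order error terms and leave a sum, over $s\leq t$, of products like $\|\hat{m}_{s,\delta}-m_{s,\delta}\|\cdot\|\hat\pi_t-\pi_t\|$ and $\|\hat\pi_s-\pi_s\|\cdot\|\hat\pi_t-\pi_t\|$. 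Note that unlike standard doubly robust settings, the $\pi_t$-$\pi_t$ cross-products appear because $\psi(\delta)$ itself depends on $\pi_t$; this is the precise reason double robustness fails here even though second-order bias still holds. Condition 4 is calibrated exactly to dominate all such terms by $o_\Pb(1/\sqrt{n})$ uniformly in $\delta$.

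Putting the pieces together, $\sqrt{n}\{\hat\psi(\delta)-\psi(\delta)\}\indist\sigma(\delta)\Gb(\delta)/\sigma(\delta)$ as processes on $\mathcal{D}$, where $\Gb$ has the stated covariance. Dividing by $\hat\sigma(\delta)$ and invoking the uniform consistency in condition 3 together with a uniform Slutsky/continuous-mapping argument (valid because $\sigma(\delta)$ is bounded away from zero on $\mathcal{D}$, which follows from the same denominator-lower-bound argument used for the Donsker step) yields the claimed weak convergence of the studentized process to $\Gb(\cdot)$ in $\ell^\infty(\mathcal{D})$.
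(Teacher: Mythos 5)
Your proposal is correct and follows essentially the same route as the paper's proof: the same three-term decomposition (Donsker leading term via Lipschitz-in-$\delta$ bracketing entropy, cross term controlled conditionally on the training folds with the envelope from condition 3, and a von-Mises-style second-order expansion of the drift term producing exactly the $\|\hat{m}_{t,\delta}-m_{t,\delta}\|\,\|\hat\pi_s-\pi_s\|$ and $\|\hat\pi_t-\pi_t\|\,\|\hat\pi_s-\pi_s\|$ products handled by condition 4), followed by the same Slutsky step for the studentization. No substantive gaps.
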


The proof of Theorem \ref{thm:unif_clt} is given in Section \ref{proof:unif_clt} of the Appendix. The logic of the proof is roughly similar to that used by \textcite{belloni2015uniformly}, but we avoid their restrictions on nuisance function entropy by sample-splitting and arguing conditionally on the training data. This allows for the use of arbitrarily complex estimators $\hat{\boldsymbol\eta}$, such as random forests, boosting, etc.  We also do not need explicit smoothness assumptions on $\psi(\delta)$ or $\varphi(\bZ;\boldsymbol\eta, \delta)$ since they are necessarily Lipschitz in $\delta$ by construction, based on our choice of the incremental intervention distribution $q_t$.

Assumptions 1--2 of Theorem \ref{thm:unif_clt} are mild boundedness conditions on the set $\mathcal{D}$ of $\delta$ values and the functions $m_t$ and their estimators, respectively. Assumption 2 could be relaxed at the expense of a less simple proof, for example with bounds on $L_p$ norms. Assumption 3 is a basic and mild consistency assumption, with no requirement on rates of convergence. The main substantive assumption is Assumption 4, which says the nuisance estimators must be consistent and converge at a fast enough rate (essentially $n^{1/4}$ in $L_2$ norm).

Importantly, the rate condition in Assumption 4 can be attained under nonparametric smoothness, sparsity, or other structural constraints. We are agnostic about how such rates might be attained since the particular required assumptions are problem-dependent; in practice we suggest using ensemble learners that can adapt to diverse kinds of structure. The particular form of the rate requirement indicates that double robustness is not possible, since we need products of the form $\| \hat\pi_t - \pi_t \| \| \hat\pi_s - \pi_s \|$ to be small, thus requiring consistent estimation of the propensity scores (albeit only at slower than parametric rates). If the propensity scores are known as in a randomized trial, then Assumption 4 will necessarily hold; in this case, the result of the theorem follows with $\varphi(\bz;\boldsymbol\eta,\delta)$ evaluated at $\overline{m}_t$ the limit of the estimator $\hat{m}_t$, which may or may not equal the true pseudo-regression $m_t$. If the propensity scores are estimated with correct parametric models, then Assumption 4 would only require a (uniformly) consistent estimator of $m_t$, without any rate conditions.

Based on the result in Theorem \ref{thm:unif_clt}, pointwise 95\% confidence intervals for $\psi(\delta)$ can be constructed as
$$ \hat\psi(\delta) \pm 1.96 \ \hat\sigma(\delta) / \sqrt{n} $$
where $\hat\sigma^2(\delta) = \Pn[\{\varphi(\bZ;\hat{\boldsymbol\eta}_{\text{-}S},\delta)-\hat\psi(\delta)\}^2]$ is the variance estimator given in the statement of the theorem. Uniform inference and testing is discussed in the next section.

\subsection{Uniform Inference \& Testing No Effect}
\label{sec:test}

In this section we present a multiplier bootstrap approach to obtaining uniform confidence bands for the incremental effect curve $\{\psi(\delta) : \delta \in \mathcal{D} \}$, along with a corresponding novel test of no treatment effect. This test can be useful in general causal inference problems, even when positivity assumptions are justified and even if incremental effects are not of particular interest.

To construct a $(1-\alpha)$ uniform confidence band of the form $ \hat\psi(\delta) \pm c_\alpha \hat\sigma(\delta) / \sqrt{n} $, as usual we need to find a critical value $c_\alpha$ that satisfies
$$ 
\Pb\left( \sup_{\delta \in \mathcal{D}} \left| \frac{\hat\psi(\delta) - \psi(\delta)}{\hat\sigma(\delta)/\sqrt{n}} \right| \leq {c}_\alpha \right) = 1-\alpha + o(1), $$
since the expression on the left is the probability that the band covers the true incremental effect curve $\psi(\delta)$ for all $\delta \in \mathcal{D}$. 

Based on the result of Theorem \ref{thm:unif_clt}, this critical value can be obtained by approximating the distribution of the supremum of the Gaussian process $\{\Gb(\delta) : \delta \in \mathcal{D}\}$ with covariance function as given in the statement of the theorem. We use the multiplier bootstrap \autocite{gine1984some, van1996weak, belloni2015uniformly} to approximate this distribution. A primary advantage of the multiplier bootstrap is its computational efficiency, since it does not require refitting the nuisance estimators, which can be expensive when there are many covariates and/or timepoints.

The idea behind the multiplier bootstrap is to approximate the distribution of the aforementioned supremum with the supremum of the  multiplier process
$$ \sqrt{n} \Pn \Big[ \xi \{ \varphi(\bZ;\hat{\boldsymbol\eta}_{\text{-}S},\delta)- \hat\psi(\delta) \} / \hat\sigma(\delta) \Big] $$
over draws of the multipliers $(\xi_1,...,\xi_n)$ (conditional on the sample data $\bZ_1,...,\bZ_n$), which are iid random variables with mean zero and unit variance that are independent of the sample. Typically one uses either Gaussian or Rademacher multipliers (i.e., $\Pb(\xi=1)=\Pb(\xi=-1)=0.5$); we use Rademacher multipliers because they gave better performance in simulations. The next theorem states that this approximation works under the same assumptions from Theorem \ref{thm:unif_clt}.

\begin{theorem}
\label{thm:unif_ci}
Let $\hat{c}_\alpha$ denote the $1-\alpha$ quantile (conditional on the data) of the supremum of the multiplier bootstrap process, i.e.,
$$ \Pb\left( \sup_{\delta \in \mathcal{D}} \left| \sqrt{n} \ \Pn\!\left[ \xi \left\{ \frac{\varphi(\bZ;\hat{\boldsymbol\eta}_{\text{-}S},\delta)- \hat\psi(\delta)}{\hat\sigma(\delta)}  \right\} \right] \right| \geq \hat{c}_\alpha \Bigm| \bZ_1,...,\bZ_n \right) = \alpha $$
where $(\xi_1,...,\xi_n)$ are iid Rademacher random variables independent of the sample. Then, under the same conditions from Theorem \ref{thm:unif_clt},
$$ \Pb \left\{ \hat\psi(\delta) - \frac{ \hat{c}_\alpha \hat\sigma(\delta)}{\sqrt{n} } \leq \psi(\delta) \leq \hat\psi(\delta) +  \frac{ \hat{c}_\alpha \hat\sigma(\delta)}{\sqrt{n} }, \text{ for all } \delta \in \mathcal{D} \right\} = 1 - \alpha + o(1) . $$
\end{theorem}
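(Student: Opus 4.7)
\textbf{Proof plan for Theorem \ref{thm:unif_ci}.} The overall plan is to show that the conditional (on the data) distribution of the supremum of the multiplier bootstrap process approximates, in probability, the distribution of $\sup_{\delta \in \mathcal{D}} |\mathbb{G}(\delta)|$ from Theorem \ref{thm:unif_clt}. Once that is established, standard quantile convergence arguments imply that $\hat{c}_\alpha$ converges in probability to the $1-\alpha$ quantile $c_\alpha$ of $\sup_\delta |\mathbb{G}(\delta)|$, and then combining with the weak convergence from Theorem \ref{thm:unif_clt} via the continuous mapping theorem (the map $f \mapsto \sup_\delta |f(\delta)|$ being continuous on $\ell^\infty(\mathcal{D})$) yields the stated coverage statement.

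The first step is to argue conditionally on the training folds, i.e., on the data used to build $\hat{\boldsymbol\eta}_{-S}$. Because of the sample splitting, given the training data, the function $\tilde\varphi(\cdot;\hat{\boldsymbol\eta}_{-S},\delta)$ is nonrandom in the test-fold averages, so the $K$ multiplier sums can each be analyzed via a conditional multiplier central limit theorem (e.g.\ Theorem 2.9.6 of van der Vaart and Wellner, or the high-dimensional versions in Chernozhukov--Chetverikov--Kato). The main technical ingredient is verifying that the function class $\{\tilde\varphi(\cdot;\boldsymbol\eta,\delta):\delta \in \mathcal{D}\}$ is sufficiently regular. I would use the fact that $\varphi$ is Lipschitz in $\delta$ on the bounded set $\mathcal{D}$ by construction (the map $\delta \mapsto \delta/(1+\delta a)$ underlying $q_t$ is smooth), and since $\mathcal{D}\subset\mathbb{R}$ is a bounded one-dimensional interval, the bracketing entropy integral is finite. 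Combined with Assumption 2 (boundedness of $m_t$) and the bounded-away-from-zero nature of $\delta\pi_t+1-\pi_t$ on $\mathcal{D}$, this gives a Donsker class with a well-behaved envelope.

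The second step is to replace the bootstrap process with its ``oracle'' version, where $(\hat{\boldsymbol\eta}_{-S},\hat\sigma,\hat\psi)$ are replaced by $(\boldsymbol\eta,\sigma,\psi)$. The difference is controlled by
$$ \sup_{\delta \in \mathcal{D}} \Big| \sqrt{n}\,\mathbb{P}_n\Big[\xi\{\tilde\varphi(\bZ;\hat{\boldsymbol\eta}_{-S},\delta) - \tilde\varphi(\bZ;\boldsymbol\eta,\delta)\}\Big] \Big|. $$
Conditional on the data, this is a mean-zero process whose conditional second moment is bounded (up to constants from Assumption 3) by the $L_2(\mathbb{P}_n)$ norm of $\varphi(\bZ;\hat{\boldsymbol\eta}_{-S},\delta)-\varphi(\bZ;\boldsymbol\eta,\delta)$ uniformly in $\delta$, which is $o_{\Pb}(1)$ by Assumption 3 of Theorem \ref{thm:unif_clt}. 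A maximal inequality over the Lipschitz-in-$\delta$ class, together with $\sup_\delta|\hat\sigma(\delta)/\sigma(\delta)-1| = o_\Pb(1)$ and $\sup_\delta|\hat\psi(\delta)-\psi(\delta)|=o_\Pb(1)$ (the latter a byproduct of Theorem \ref{thm:unif_clt}), makes the gap uniformly negligible in the conditional probability.

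The hard part will be the third step: combining the conditional multiplier CLT with the stochastic nuisance-substitution error in a way that gives convergence of the full conditional distribution (not merely marginals). The cleanest route is to verify Hoffmann--J\o rgensen-style convergence of the finite-dimensional distributions together with asymptotic equicontinuity of the bootstrap process, both uniformly in the data with inner probability one, so that the Portmanteau lemma applies conditionally. Asymptotic equicontinuity follows from the bracketing entropy bound together with the $L_2$-continuity in $\delta$. Once this conditional weak convergence to $\mathbb{G}$ is obtained, the distribution function of $\sup_\delta|\mathbb{G}(\delta)|$ is continuous (since $\mathbb{G}$ is a nondegenerate centered Gaussian process with continuous sample paths on a compact interval), so the quantiles converge: $\hat{c}_\alpha \inprob c_\alpha$. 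Applying this to the coverage probability and using Theorem \ref{thm:unif_clt} one more time yields the stated $1-\alpha+o(1)$ coverage.
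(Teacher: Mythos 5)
Your plan is correct in outline and shares the paper's central strategy---linking both the studentized estimator process and the multiplier process to the same limiting Gaussian process $\Gb$, with sample splitting letting you treat $\hat{\boldsymbol\eta}_{\text{-}S}$ as fixed---but the technical route differs in how that linkage is turned into the coverage statement. You go through the classical chain: conditional multiplier CLT for the Lipschitz-in-$\delta$ Donsker class, a nuisance-substitution step controlled by Assumption 3, conditional weak convergence of the bootstrap process to $\Gb$, continuity of the law of $\sup_{\delta}|\Gb(\delta)|$ to get $\hat{c}_\alpha \inprob c_\alpha$, and finally the continuous mapping theorem applied to Theorem \ref{thm:unif_clt}. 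The paper never establishes convergence of $\hat{c}_\alpha$ at all: it uses the exact identity $\Pb(\| \widehat\Psi^*_n \|_{\mathcal{D}} \leq \hat{c}_\alpha) = 1-\alpha$ (immediate from the definition of $\hat{c}_\alpha$ after integrating out the conditioning), and then shows that the distribution functions of the suprema of the estimator process and of the bootstrap process are each \emph{uniformly} close to that of $\|\Gb\|_{\mathcal{D}}$, via the Gaussian approximation of suprema and the anti-concentration comparison lemma of \textcite{chernozhukov2014gaussian} together with Corollary 2.2 of \textcite{belloni2015uniformly} on the bootstrap side; a triangle inequality finishes. The two routes are morally equivalent---your continuity-of-the-supremum-law step plays exactly the role of the paper's anti-concentration lemma---but the paper's version avoids formalizing conditional weak convergence and quantile consistency, at the price of invoking the Chernozhukov--Chetverikov--Kato machinery. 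Two points to tighten in your version: (i) in your second step the conditional second moment of the multiplier difference process is an $L_2(\Pn)$ quantity, whereas Assumption 3 controls the $L_2(\Pb)$ norm of the envelope, so you need a line converting one to the other (straightforward given the boundedness in Assumption 2); (ii) quantile convergence needs strict monotonicity, not just continuity, of the distribution function of $\sup_{\delta}|\Gb(\delta)|$ at $c_\alpha$, which holds for nondegenerate $\Gb$ but should be stated.
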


The proof of Theorem \ref{thm:unif_ci} is given in Section \ref{proof:unif_ci} of the Appendix, and follows by linking the multiplier bootstrap process to the same Gaussian process $\Gb$ to which the scaled estimator $\hat\psi(\delta)$ converges. As mentioned above, the multiplier bootstrap only requires simulating the multipliers $\xi$ and not re-estimating the nuisance functions, so it is straightforward and fast to implement. We include an implementation in the R function given in Section \ref{sec:code} of the Appendix, \textcolor{black}{as well as in the} \verb|npcausal| \textcolor{black}{R package available at GitHub (github.com/ehkennedy/npcausal).}

Given the above uniform confidence band, we can test the null hypothesis of no incremental intervention effect
$$ H_0: \psi(\delta) = \E(Y) \ \text{ for all } \delta \in \mathcal{D} , $$
by simply checking whether a $(1-\alpha)$ band contains a straight line over $\mathcal{D}$. In other words we can compute a p-value as
$$ \hat{p} = \sup \Big\{ \alpha : \inf_{\delta \in \mathcal{D}} \{ \hat\psi(\delta) + \hat{c}_\alpha \hat\sigma(\delta) / \sqrt{n} \} \geq \sup_{\delta \in \mathcal{D}} \{ \hat\psi(\delta) - \hat{c}_\alpha \hat\sigma(\delta) / \sqrt{n} \} \Big\} . $$
Note that the condition in the above set corresponds to failing to reject $H_0$ at level $\alpha$, since there is space for a straight line between the smallest upper confidence limit and largest lower confidence limit. We will necessarily fail to reject at level $\alpha=0$ since this amounts to an infinitely wide confidence band, and the p-value is the largest $\alpha$ at which we fail to reject (i.e., the p-value is small if we reject even for wide bands, and large if we need to move to narrower bands or never reject).

Interestingly, the hypothesis we test above lies in a middle ground between Fisher's null of no individual effect and Neyman's null of no average effect. $H_0$ is a granular hypothesis perhaps closer to Fisher's null than Neyman's, but it can still be tested nonparametrically and in a longitudinal superpopulation framework. This is in contrast to common tests of Fisher's null that operate under additive effect hypotheses and are limited to point exposures \autocite{rosenbaum2002covariance}. Thus tests of the null $H_0$ can be useful in general settings, independent of any interest in pursuing incremental intervention effects or avoiding positivity assumptions.

\section{Illustrations}

\subsection{Simulation Study}

Here we explore finite-sample properties via simulation, based on the simulation setup used by \textcite{kang2007demystifying}. In particular we consider their model
\begin{equation*}
\begin{gathered}
(X_1,X_2,X_3,X_4) \sim N(\mathbf{0},\mathbf{I}) , \\
\Pb(A=1 \mid \bX)= \expit(-X_1 + 0.5 X_2 - 0.25 X_3 - 0.1 X_4)  \\
(Y \mid \bX, A) \sim N\{ \mu(\bX,A) , 1 \} 
\end{gathered}
\end{equation*}
where the regression function is given by $\mu(\bx,a) = 200 + a\{10 + 13.7(2x_1 + x_2 + x_3 + x_4)\}$. 
This simulation setup is known to yield variable propensity scores that can degrade the performance of weighting-based estimators. 

We considered three estimators in our simulation: a plug-in estimator given by
$$ \hat\psi_{pi}(\delta) = \Pn \left[ \frac{\delta \hat\pi(\bX) \hat\mu(\bX,1) + \{1-\hat\pi(\bX)\} \hat\mu(\bX,0) }{\delta \hat\pi(\bX) + 1-\hat\pi(\bX)} \right] , $$
along with the inverse-probability-weighted (IPW) estimator and proposed efficient estimator described in Sections \ref{sec:simple}--\ref{sec:proposed}. We further considered four versions of each these estimators, depending on how the nuisance functions were estimated: correct parametric models, misspecified parametric models based on transformed covariates $\bX^*$ 
(using the same covariate transformations as \textcite{kang2007demystifying}), and nonparametric estimation (using original or transformed covariates). For nonparametric estimation we used the cross-validation-based Super Learner ensemble \autocite{van2007super} to combine generalized additive models, multivariate adaptive regression splines, support vector machines, and random forests, along with parametric models (with and without interactions, and with terms selected stepwise via AIC). Regardless of estimator (plug-in, IPW, or proposed), for nonparametric nuisance estimation we used sample splitting as described in Section \ref{sec:proposed} with $K=2$ splits.

Estimator performance was assessed via integrated bias and root-mean-squared error
$$ \widehat{\text{bias}} = \frac{1}{I} \sum_{i=1}^I \Big| \frac{1}{J} \sum_{j=1}^J \hat\psi_j(\delta_i) - \psi(\delta_i) \Big| \ , \ \ 
\widehat{\text{RMSE}} = \frac{\sqrt{n}}{I} \sum_{i=1}^I \left[ \frac{1}{J} \sum_{j=1}^J \Big\{ \hat\psi_j(\delta_i) - \psi(\delta_i) \Big\}^2 \right]^{1/2} $$
across $J=500$ simulations and $I=100$ values of $\delta$ equally spaced (on the log scale) between $\exp(-2.3) \approx 0.1$ and $\exp(2.3) \approx 10$. Results are given in Figure \ref{tab:simtab}.

\begin{figure}[h!]
\begin{center}
\includegraphics[width=\textwidth]{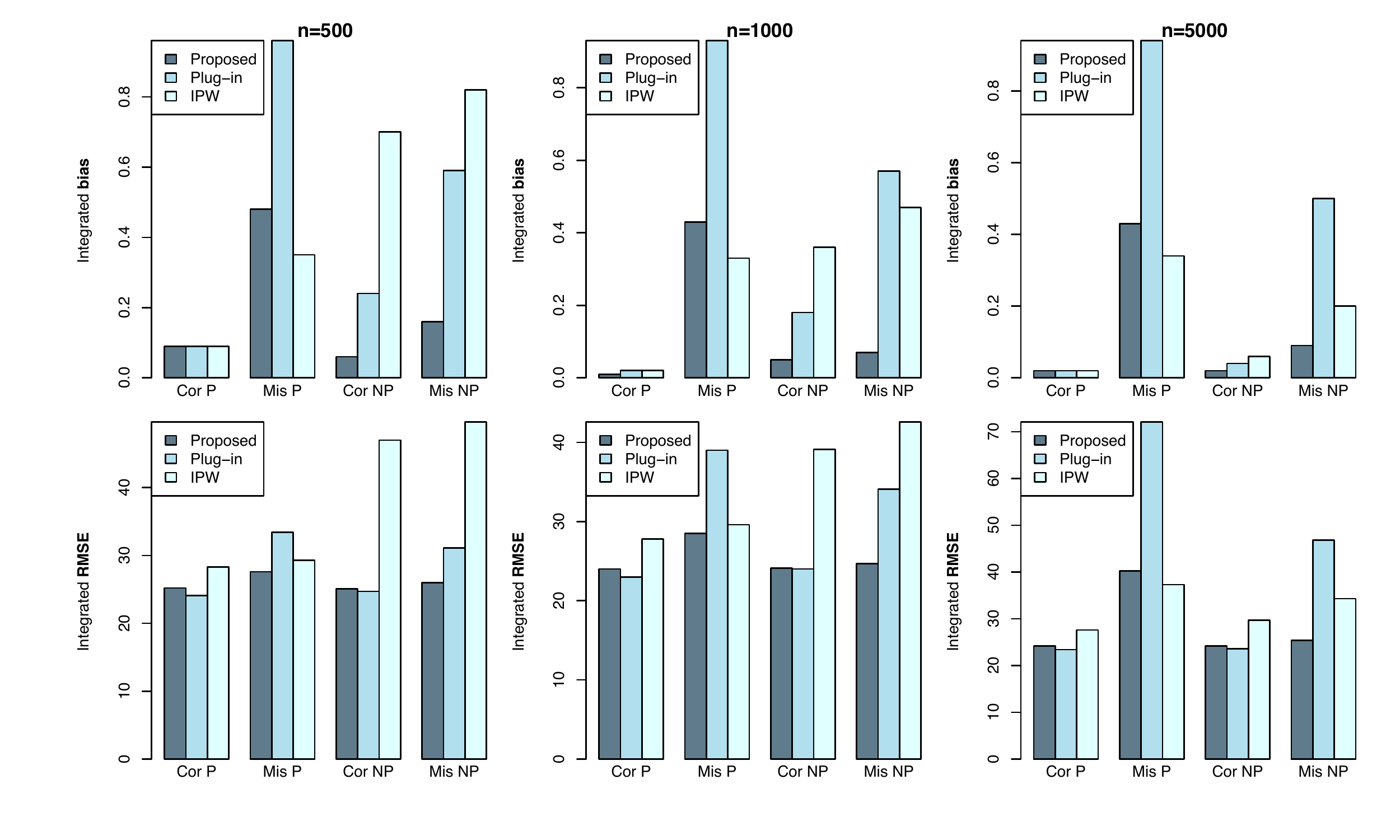}
\caption{Integrated bias and root-mean-squared-error (RMSE) across 500 simulations. (IPW = inverse-probability-weighted; 
P/NP = parametric/nonparametric nuisance estimation based on covariates $\bX$ (Cor) or transformed version $\bX^*$ (Mis).)} \label{tab:simtab}
\end{center}
\end{figure}

In each setting, the proposed estimator performed as well or better than the plug-in and IPW versions. When the nuisance functions were estimated with correct parametric models, all methods gave small bias and RMSE, with the plug-in and proposed estimators slightly outperforming the IPW estimator in terms of RMSE. Under parametric misspecification, bias and RMSE were amplified for all estimators and the plug-in fared worst. A more interesting (but expected) story appeared with nonparametric nuisance estimation. There, the plug-in and IPW estimators show large bias and RMSE, since they are not expected to converge at $\sqrt{n}$ rates; in contrast, the proposed efficient estimator essentially matches its behavior when constructed based on correct parametric models (with only a slight loss in RMSE). This is indicative of the fact that the proposed estimator only requires $n^{1/4}$ rates on nuisance estimation to achieve full efficiency and in general has second-order bias. This behavior appears to hold in our simulations even for nonparametric estimation using $\bX^*$, i.e., when the true model is not used directly.

We also assessed the uniform coverage of our proposed multiplier bootstrap confidence bands (as usual, we say a band covers if it contains the true curve entirely for all $\delta \in \mathcal{D}$). \textcolor{black}{Results are given in Table 1. As expected, coverage is very poor when nuisance functions are estimated with misspecified parametric models.} Coverage was near the nominal level (95\%) in large samples as long as nuisance functions were estimated with correct parametric models or nonparametrically using the non-transformed covariates $\bX$ (coverage was slightly diminished  for nonparametric nuisance estimation based on the misspecified $\bX^*$). 

\begin{table}[h!]
\caption{Coverage of proposed uniform 95\% confidence band across 500 simulations. \label{tab:simtab2}}
\begin{center}
\begin{tabular}{lrrrr}
\hline
Sample & \multicolumn{4}{c}{Coverage (\%) for setting:} \\
size $n$ & Cor P & Mis P & Cor NP & Mis NP  \\
\hline
500 & 92.4 & 77.0 & 93.0 & 88.0 \\
1000 & 95.2 & 67.6 & 95.6 & 92.4 \\
5000 & 94.8 & 12.4 & 94.2 & 89.4 \\
\hline
\end{tabular} 
\end{center}
\end{table}

\subsection{Application}
\label{sec:app}

Here we illustrate the use of incremental intervention effects with a reanalysis of the National Longitudinal Survey of Youth 1997 data used by \textcite{huebner2005effect}, \textcite{bacak2015marginal}, and others to study the effects of incarceration on marriage. Incarceration is a colossal industry in the United States, with over 2.3 million people currently confined in a correctional facility and at least twice that number held on probation or parole \autocite{wagner2016mass}. There is a large literature on unintended effects of this mass incarceration, with numerous studies pointing to negative impacts on various aspects of employment, health, social ties, psychology, and more \autocite{pattillo2004imprisoning, clear2009imprisoning}. Effects of incarceration on marriage are important since marriage is expected to yield, for example, better family and social support, better outcomes for children, and less recidivism, among other benefits \autocite{huebner2005effect, clear2009imprisoning}. \textcite{bacak2015marginal} were the first to study this question while specifically accounting for time-varying confounders, such as employment and earnings, and we refer there for more motivation and background.

The National Longitudinal Survey of Youth 1997 data consists of yearly measures across 14 timepoints, from 1997 to 2010, for participants who were 12--16 years old at the initial survey. The data include demographic information (e.g., age, race, gender, parent's education), various delinquency indicators (e.g., age at first sex, measures of drug use and gang membership, delinquency scores), as well as numerous time-varying measures (e.g., employment, earnings, marriage and incarceration history). Following \textcite{bacak2015marginal}, we use the final 10 timepoints from 2001--2010, restrict the analysis to the 4781 individuals with a non-zero delinquency score at baseline, and use as outcome $Y$ the indicator of marriage at the end of the study (i.e., in 2010).

\textcite{bacak2015marginal} used a standard marginal structural model approach to study effects of static incarceration trajectories, which has some limitations. First, it requires a parametric model to describe how incarceration trajectories affect marriage rates. In particular \textcite{bacak2015marginal} used $\E(Y^{\overline{a}_T}) = \expit( \beta_0 + \beta_1 \sum_t a_t)$, which only allows marriage prevalence to depend on total time spent incarcerated. This kind of assumption is very common in practice but is quite restrictive, especially since a saturated structural model in this case would have $2^{10}=1024$ parameters, instead of only two. Hence the data only inform $2/1024=0.2\%$ of the possible parameter values. \textcolor{black}{In fact if the model is slightly elaborated, e.g., to $\E(Y^{\overline{a}_T}) = \expit( \beta_0 + \sum_t \beta_{1t} a_t)$ so that $\beta_1$ can vary with time, then a standard weighting  estimator fails and no coefficient estimates can be found.}  Another limitation is that \textcite{bacak2015marginal} used parametric inverse probability weighting to estimate $(\beta_0,\beta_1)$ (partly for pedagogic purposes), but this is both inefficient and likely biased due to propensity score model misspecification. Perhaps most importantly, a standard marginal structural model setup requires imagining sending \textit{all} or \textit{none} of the study participants to prison at each time. However, positivity is likely violated here since some individuals may be necessarily incarcerated at some times (e.g., due to multiple-year sentences) or have essentially zero chance of incarceration (based on demographic or other characteristics).  These limitations are not at all unique to the analysis of \textcite{bacak2015marginal}, but instead are common to many observational marginal structural model analyses; we build on their analysis by instead estimating incremental incarceration effects, which require neither any parametric models nor any positivity assumptions. 

Specifically we estimated the incremental effect curve $\psi(\delta)$, which in this setting represents the marriage prevalence at the end of the study if the odds of incarceration were multiplied by factor $\delta$. We used Random Forests (via the \verb|ranger| package in R) to estimate all nuisance functions $\pi_t$ and $m_t$ as described in Algorithm 1 (with $K=10$-fold sample splitting), and computed pointwise and uniform confidence bands as in Sections \ref{sec:clt} and  \ref{sec:test} (with 10,000 bootstrap replications). Results are shown in Figure~\ref{fig:app_plot}.

\begin{figure}[h!]
\begin{center}
\includegraphics[width=.64\textwidth]{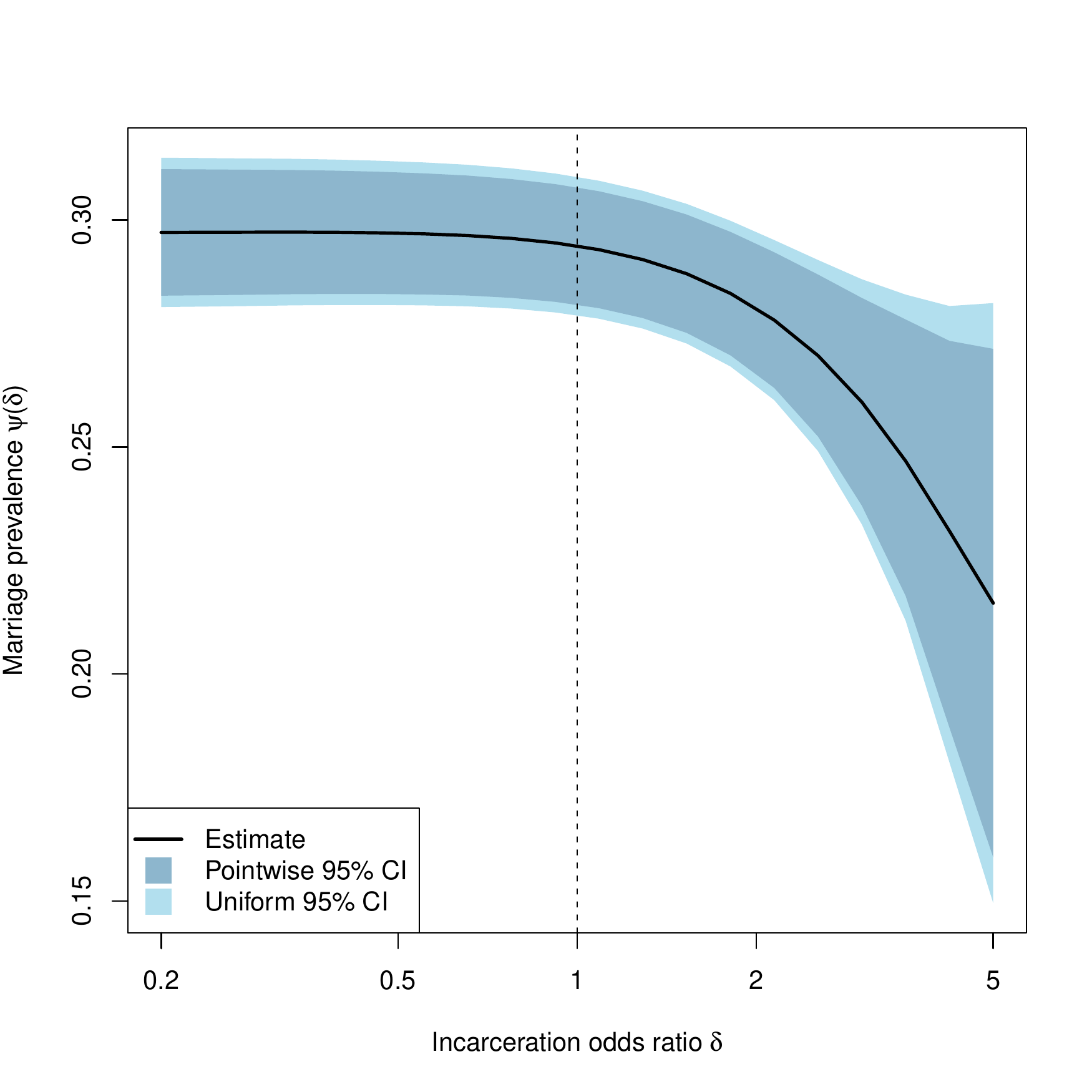}
\caption{Estimated marriage prevalence 10 years post-baseline, if the incarceration odds were multiplied by factor $\delta$, with pointwise and uniform 95\% confidence bands.} \label{fig:app_plot}
\end{center}
\end{figure}

We find strong evidence (assuming no unmeasured confounding and consistency) that incarceration negatively impacts marriage rates. First, we reject the null hypothesis of no incremental effect of incarceration on marriage ($p =0.049$) over the range $\delta \in [0.2,5]$. More specifically we estimate that, if incarceration odds were increased proportionally for all individuals, marriage prevalence would drop from $\Pn(Y)=29.4\%$ observationally to 28.1\% if the odds doubled (OR=0.94, 95\% CI: 0.87--1.00), and to 23.6\% if the odds were multiplied four-fold (OR=0.74, 95\% CI: 0.59--0.91). Conversely, we estimate that marriage prevalence would only increase to 29.7\% if the odds of incarceration were halved (OR=1.01, 95\% CI: 0.95--1.08); the prevalence and odds ratio are the same if the odds were quartered. These results suggest that marriage rates might be more affected by increased rather than decreased incarceration (i.e., the curve in Figure \ref{fig:app_plot} is nonlinear, with larger slope for $\delta>1$). This analysis provides considerably more nuance than a simple marginal structural model fit, and requires none of the parametric and positivity assumptions.

\section{Discussion}

In this paper we have proposed incremental intervention effects, which are based on shifting propensity scores rather than setting treatment values. We showed that these effects can be identified and estimated without any positivity or parametric assumptions, established general efficiency theory, and constructed influence-function-based estimators that yield fast rates of convergence even when based on flexible nonparametric regression tools. We also developed an approach for uniform inference and a new test of no treatment effect, and applied the methods in a longitudinal study of incarceration effects on marriage.  

There are a few caveats to our developments that are worth mentioning. First, we expect incremental intervention effects to play a more descriptive than prescriptive role compared to other approaches. Specifically, they give an interpretable picture of what would happen if exposure were increased or decreased in a natural way, but will likely be less useful for informing specific treatment decisions. For example in our analysis from Section \ref{sec:app} the goal was to better understand the overall societal effects of mass incarceration; in cases where the goal is to learn how to best assign treatment, methods for optimal treatment regime estimation will likely be more relevant. However, note that it is certainly possible to estimate the optimal incremental regime $q_t(\bh_t; \delta^*,\pi_t)$ for $\delta^* = \argmax_\delta \E(Y^{\bQ(\delta)})$; so in theory incremental effects could be used to construct specific treatment decision rules. 

Another caveat is that, in favor of computational efficiency, we have bypassed concerns about model compatibility when estimating the pseudo-regression functions $m_t$. It can be difficult to formulate models for all the $m_t$ functions that are compatible with each other, since $m_t$ has a complicated dependence on $m_{t+1}$ (as well as the propensity scores $\pi_{t+1}$ and covariate densities). To make these estimators fully compatible, we would need to model the conditional densities of the (high-dimensional) covariates and construct $\hat{m}_t$ based on the non-recursive expression in Theorem \ref{thm:eif}. However, we feel that if flexible enough estimators for $m_t$ are used, then model incompatibility will likely not be a major concern in practice, particularly relative to the computational benefits. This issue also arises in estimating standard longitudinal causal effects \autocite{scharfstein1999adjusting, murphy2001marginal}.

In future work we plan to pursue various extensions of incremental intervention effects. For example, it will be important to consider (i) interventions with increment parameters $\delta=\delta(\bh_t)$ that depend on time and past covariate history, (ii) estimation of how mean outcomes under different interventions vary with covariates (effect modification), (iii) extensions for settings with multivalued treatments and/or censored outcomes, and (iv) increment parameters based on risk ratios or other shifts,  rather than odds ratios.

\pagebreak

\stepcounter{section}
\printbibliography[title={\thesection \ \ \ References}]

\pagebreak

\spacingset{1}
\setcounter{page}{1}

\section{Appendix} 

\subsection{Proof of Theorem \ref{thm:ident}}

First we give a useful identification result for general stochastic intervention effects.

\begin{lemma}
Let $\bQ=(Q_1,...,Q_T)$ denote a general stochastic intervention in which treatment at time $t$ is randomly assigned according to distribution function $Q_t(a_t \mid \bh_t)$. Under Assumptions 1--2, and if (weak) positivity holds in the sense that
$$ d\Pb(a_t \mid \bh_t) = 0 \implies dQ_t(a_t \mid \bh_t) = 0 $$
then the mean outcome $\E(Y^\bQ)$ under the intervention is identified by
$$ \psi^*(\bQ)= \int_\mathcal{A} \int_\mathcal{X} \E(Y \mid \overline\bX_T=\overline\bx_T, \overline{A}_T=\overline{a}_T) \prod_{t=1}^T dQ_t(a_t \mid \bh_t)  \ d\Pb(\bx_{t} \mid \bh_{t-1},a_{t-1}) , $$
where $\mathcal{A}=\mathcal{A}_1 \times \dots \mathcal{A}_T$ and $\mathcal{X}=\mathcal{X}_1 \dots \mathcal{X}_T$.
\end{lemma}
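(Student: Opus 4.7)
The plan is to adapt Robins' g-formula argument to a stochastic intervention by combining a one-step time-ordered induction on the joint law of the counterfactual history-and-treatment path with a final appeal to consistency and exchangeability. Concretely, I would first rewrite
$$\E(Y^\bQ) = \E\!\left[\,\E(Y^\bQ \mid \bH_T^\bQ, A_T^\bQ)\,\right]$$
by iterated expectations, where $\bH_t^\bQ = (\overline{\bX}{}_t^\bQ, \overline{A}{}_{t-1}^\bQ)$ denotes the counterfactual history generated under the intervention $\bQ$. By Assumptions 1--2 applied at the terminal time, the inner conditional mean collapses to $\mu(\bh_T, a_T) = \E(Y \mid \bH_T=\bh_T, A_T=a_T)$, so the problem reduces to identifying the joint distribution of $(\bH_T^\bQ, A_T^\bQ)$ in terms of observable quantities.

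The main work is then an induction on $t$ showing that $(\bH_t^\bQ, A_t^\bQ)$ has density
$$\prod_{s=1}^{t} dQ_s(a_s \mid \bh_s)\, d\Pb(\bx_s \mid \bh_{s-1}, a_{s-1})$$
with respect to the appropriate dominating measure on $\mathcal{H}_t \times \mathcal{A}$. The base case is trivial. For the inductive step I would argue (a) that conditional on the counterfactual past $(\bh_{t-1}, a_{t-1})$, the covariate $\bX_t^{\overline{a}_{t-1}}$ has conditional law equal to $\Pb(\bx_t \mid \bh_{t-1}, a_{t-1})$, by sequential exchangeability applied to $\bX_t$ as a potential outcome under $\overline{a}_{t-1}$; and (b) that given $\bH_t^\bQ = \bh_t$, the drawn treatment $A_t^\bQ$ has distribution $Q_t(\cdot \mid \bh_t)$ by construction of $\bQ$. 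Multiplying these conditional densities with the inductive hypothesis gives the update, and iterating to $t=T$ yields the claimed expression for $\psi^*(\bQ)$ after integrating against $\mu(\bh_T, a_T)$.

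The main obstacle is step (a): Assumption 2 as written is stated only for the final outcome $Y^{\overline{a}_T}$, but the induction requires an analogous exchangeability for intermediate covariate potential outcomes $\bX_t^{\overline{a}_{t-1}}$. I would handle this by invoking the standard sequential-randomization convention under which Assumption 2 is posited jointly for all covariate potential outcomes, so that $\bX_t^{\overline{a}_{t-1}} \ind A_{t-1} \mid \bH_{t-1}$ and consistency collapses $\bX_t^{\overline{A}{}_{t-1}}$ to the observed $\bX_t$ on the event $\overline{A}_{t-1} = \overline{a}_{t-1}$. The weak positivity hypothesis $dQ_t \ll d\Pb(\cdot \mid \bh_t)$ is invoked precisely to ensure that every trajectory $(\bh_t, a_t)$ receiving positive mass under the intervention lies in the support of $\Pb$, so that the conditional distributions $\Pb(\bx_t \mid \bh_{t-1}, a_{t-1})$ and $\Pb(y \mid \bh_T, a_T)$ used in the identification formula are well-defined on the relevant set and the integral converges.
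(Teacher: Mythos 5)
Your route is genuinely different from the paper's: you identify the full joint law of the counterfactual history process $(\bH_T^\bQ, A_T^\bQ)$ by forward induction and then integrate $\mu(\bh_T,a_T)$ against it, whereas the paper runs a backward recursion on the observed-data conditional means $\E(Y^{(\overline{a}_{t-1},\underline\bQ_t)} \mid \bH_{t-1}, A_{t-1})$, peeling off one timepoint per step by (i) iterated expectation over $(\bX_t, Q_t)$ given $(\bH_{t-1},A_{t-1})$, (ii) swapping the conditioning event $Q_t=q_t$ for $A_t=q_t$ using $Q_t \ind Y^\bQ \mid \bH_t$ together with Assumption 2, and (iii) invoking consistency only once, at the terminal step. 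The backward recursion never needs to know the distribution of any counterfactual covariate: every conditional expectation along the way is taken with respect to the \emph{observed} history $\bH_t$, and the only potential outcomes that appear are those of $Y$.

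This is exactly where your proposal has a real gap, and you correctly diagnose it yourself: your inductive step (a), and likewise the terminal collapse of $\E(Y^\bQ \mid \bH_T^\bQ, A_T^\bQ)$ to $\mu(\bh_T,a_T)$, require exchangeability statements for the covariate potential outcomes $\bX_t^{\overline{a}_{t-1}}$ (and for the joint law of $(\bH_T^{\overline{a}}, Y^{\overline{a}})$), which are not implied by Assumption 2 as stated, since that assumption concerns only $Y^{\overline{a}_T}$. Your fix --- declaring a ``sequential-randomization convention'' under which Assumption 2 is posited jointly for all covariate potential outcomes --- proves the lemma under a strictly stronger hypothesis than the one in its statement. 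That is not a harmless convention here: it is a known feature of the g-formula that final-outcome sequential exchangeability plus consistency suffices, and the paper's backward argument is structured precisely to exploit this. To repair your proof without strengthening the assumptions you would have to abandon the claim that the counterfactual covariate law is identified and instead push the potential outcome $Y$ through the recursion, which collapses your argument back into the paper's. Your discussion of the role of weak positivity (ensuring the relevant conditionals are well-defined wherever $dQ_t$ puts mass) matches the paper's.
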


\begin{proof}
This essentially follows by the g-formula of \textcite{robins1986new}. Let underbars denote the future of a sequence so that for example $\underline\bQ_t=(Q_t,...,Q_T)$. Then we have the recursion
\begin{align*}
\E(Y^{(\overline{a}_{t-1}, \underline\bQ_t)} &\mid \bH_{t-1}, A_{t-1}) = \E\{ \E(Y^{(\overline{a}_{t-1}, \underline\bQ_t)} \mid \bH_t , Q_t) \mid \bH_{t-1}, A_{t-1} \}  \\
&= \int \E(Y^{(\overline{a}_{t-1},q_t, \underline\bQ_{t+1})}  \mid \bH_t=\bh_t, Q_t=q_t ) \ dQ_t(q_t \mid \bh_t) \ d\Pb(\bx_t \mid \bh_{t-1}, a_{t-1}) \\
&= \int \E(Y^{(\overline{a}_{t-1},q_t, \underline\bQ_{t+1})}  \mid \bH_t=\bh_t, A_t=q_t ) \ dQ_t(q_t \mid \bh_t) \ d\Pb(\bx_t \mid \bh_{t-1}, a_{t-1}) \\
&= \int \E(Y^{(\overline{a}_t, \underline\bQ_{t+1})}  \mid \bH_t=\bh_t, A_t=a_t ) \ dQ_t(a_t \mid \bh_t) \ d\Pb(\bx_t \mid \bh_{t-1}, a_{t-1})
\end{align*}
for $t=1, ..., T$, where the first equality follows by iterated expectation, the second by definition, the third since $Q_t \ind Y^\bQ \mid \bH_t$ (by definition) along with exchangeability (Assumption 2), and the fourth by simply rewriting the index $q_t$ as $a_t$.  The weak positivity condition is required so that the above outer expectation is well-defined (that the inner expectation may not be is fine since, by positivity, in such cases the multiplier $dQ_t$ will be zero).

Therefore applying the above $T$ times yields
\begin{align*}
\E(Y^{\bQ}) &= \int_{\mathcal{A}_1}  \int_{\mathcal{X}_1} \E(Y^{(a_1, \underline\bQ_2)} \mid \bX_1=\bx_1, A_1=a_1) \ dQ_1(a_1 \mid \bx_1) \ d\Pb(\bx_1) \\
&= \int\limits_{\mathcal{A}_1 \times \mathcal{A}_2}   \int\limits_{\mathcal{X}_1 \times \mathcal{X}_2}  \E(Y^{(\overline{a}_2, \underline\bQ_3)} \mid \bH_2=\bh_2, A_2=a_2) \prod_{t=1}^2 dQ_t(a_t \mid \bh_t)  \ d\Pb(\bx_{t} \mid \bh_{t-1},a_{t-1}) \\
&= \int_\mathcal{A} \int_\mathcal{X} \E(Y^{\overline{a}_T} \mid \overline\bX_T=\overline\bx_T, \overline{A}_T=\overline{a}_T) \prod_{t=1}^T dQ_t(a_t \mid \bh_t)  \ d\Pb(\bx_{t} \mid \bh_{t-1},a_{t-1}) \\
&= \int_\mathcal{A} \int_\mathcal{X} \E(Y \mid \overline\bX_T=\overline\bx_T, \overline{A}_T=\overline{a}_T) \prod_{t=1}^T dQ_t(a_t \mid \bh_t)  \ d\Pb(\bx_{t} \mid \bh_{t-1},a_{t-1})
\end{align*}
where the last equality follows by consistency (Assumption 1).
\end{proof}

Now Theorem \ref{thm:ident} follows from Lemma 1, letting
$$ dQ_t(a_t \mid \bh_t) = \frac{ a_t \delta \pi_t(\bh_t) + (1-a_t) \{1-\pi_t(\bh_t)\} }{ \delta \pi_t(\bh_t) + 1-\pi_t(\bh_t) } $$
and noting that
\begin{align*}
\pi_t(\bh_t) = 0 &\implies dQ_t(1 \mid \bh_t) = 0 \\
\pi_t(\bh_t)=1 &\implies dQ_t(0 \mid \bh_t) = 0 
\end{align*}
so that the weak positivity condition is automatically satisfied by our choice of $dQ_t$.

\subsection{Proof of Theorem \ref{thm:eif}}
\label{sec:proof_eif}

First we derive the efficient influence function for a general stochastic intervention effect when the intervention distribution $Q$ does not depend on the observed data distribution $\Pb$. 

\begin{lemma}
\label{lem:eif_qknown}
Suppose $\bQ$ is a known stochastic intervention not depending on $\Pb$. Define
$$ m_t(\bh_t,a_t) = \int_{\mathcal{R}_t} \mu(\bh_T,a_T) \prod_{s={t+1}}^T  dQ_s(a_s \mid \bh_s) \ d\Pb(\bx_{s} \mid \bh_{s-1},a_{s-1}) $$
for $t=0,...,T-1$ and $\mathcal{R}_t= (\mathcal{H}_T \times \mathcal{A}_T) \setminus \mathcal{H}_t$, and let $m_T(\bh_T,a_T)=\mu(\bh_T,a_T)$ and $m_{T+1}(\bh_{T+1},a_{T+1})=Y$. Then the efficient influence function for $\psi^*(\bQ)=m_0$ is
\begin{align*}
& \sum_{t=0}^{T} \left\{ \int_{\mathcal{A}_{t+1}} \!\!\! m_{t+1}(\bH_{t+1},a_{t+1}) \ dQ_{t+1}(a_{t+1} \mid \bH_{t+1}) - m_{t}(\bH_{t},A_{t}) \right\} \prod_{s=0}^{t} \frac{ dQ_s(A_s \mid \bH_s) }{ d\Pb(A_s \mid \bH_s) } \\
&= \sum_{t=1}^T \left\{ \int_{\mathcal{A}_t} m_t(\bH_t,a_t) \ dQ_t(a_t \mid \bH_t) -  m_t ( \bH_t,A_{t}) \frac{ dQ_t(A_t \mid \bH_t) }{ d\Pb(A_t \mid \bH_t) } \right\}  \prod_{s=0}^{t-1} \frac{ dQ_s(A_s \mid \bH_s) }{ d\Pb(A_s \mid \bH_s) } \\
& \hspace{.5in} + \prod_{s=1}^T \frac{ dQ_s(A_s \mid \bH_s) }{ d\Pb(A_s \mid \bH_s) } Y - \psi^*(\bQ) 
\end{align*}
where we define $dQ_{T+1}=1$ and $dQ_0(a_0 \mid \bh_0)/d\Pb(a_0 \mid \bh_0)= 1$.
\end{lemma}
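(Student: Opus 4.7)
The plan is to apply the standard semiparametric recipe: for a regular parametric submodel $\{\Pb_\epsilon\}$ with score $S(\bZ)=d\log p_\epsilon(\bZ)/d\epsilon|_{\epsilon=0}$, compute the pathwise derivative $d\psi^*(\Pb_\epsilon)/d\epsilon|_{\epsilon=0}$ and identify the candidate EIF $\phi$ as the unique mean-zero element of $L_2(\Pb)$ whose inner product $\E[\phi\, S]$ reproduces that derivative for every $S$. In the nonparametric model the density factorizes as $p(\bz)=\prod_{t=1}^T p(\bx_t\mid\bh_{t-1},a_{t-1})p(a_t\mid\bh_t)\cdot p(y\mid\bh_T,a_T)$, so the tangent space decomposes as an orthogonal direct sum of mean-zero conditional score subspaces $\mathcal{T}_{\bX_t}$, $\mathcal{T}_{A_t}$, $\mathcal{T}_Y$, and $\phi$ decomposes correspondingly as $\phi=\sum_t\phi_{\bX_t}+\sum_t\phi_{A_t}+\phi_Y$.

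The key observation is that, since $\bQ$ does not depend on $\Pb$, inspecting the g-formula shows $\psi^*(\Pb)$ depends on $\Pb$ only through $\mu$ and the covariate conditionals $d\Pb(\bx_t\mid\bh_{t-1},a_{t-1})$, and not at all through the propensities $d\Pb(a_t\mid\bh_t)$. So every score $S_{A_t}$ yields $d\psi^*/d\epsilon=0$ and $\phi_{A_t}=0$. For $\phi_Y$, differentiating along $S_Y$ gives $\int\E[(Y-\mu)S_Y\mid\bh_T,a_T]\prod_t dQ_t(a_t\mid\bh_t)\,d\Pb(\bx_t\mid\bh_{t-1},a_{t-1})$; converting the $\bQ$-integration over treatments to a $\Pb$-expectation via the Radon-Nikodym weight $R_T=\prod_t dQ_t/d\Pb(A_t\mid\bH_t)$ rewrites this as $\E[R_T(Y-\mu)S_Y]$, so $\phi_Y=R_T(Y-m_T)$. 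For $\phi_{\bX_t}$, the analogous derivative is $\int\mu\cdot S_{\bX_t}\prod_s dQ_s(a_s\mid\bh_s)\,d\Pb(\bx_s\mid\bh_{s-1},a_{s-1})$; marginalizing out indices $s>t$ collapses $\int\mu\prod_{s>t}$ to $m_t(\bh_t,a_t)$ by the definition of $m_t$, integrating $a_t$ against $dQ_t$ produces $\int m_t(\bh_t,a_t)\,dQ_t(a_t\mid\bh_t)$, and the remaining $s<t$ factors convert to a $\Pb$-expectation with weight $R_{t-1}$. Because $\E[S_{\bX_t}\mid\bH_{t-1},A_{t-1}]=0$, I may freely subtract any $(\bH_{t-1},A_{t-1})$-measurable centering from the integrand; taking the centering $m_{t-1}(\bH_{t-1},A_{t-1})$, which equals the conditional expectation of the uncentered integrand by the sequential-regression recursion in Remark 2, identifies $\phi_{\bX_t}=R_{t-1}\{\int m_t(\bH_t,a_t)\,dQ_t(a_t\mid\bH_t)-m_{t-1}(\bH_{t-1},A_{t-1})\}$. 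Reindexing $t\mapsto t+1$ and summing over $t=0,\ldots,T$, with the conventions $m_{T+1}=Y$ and $dQ_{T+1}=1$ absorbing $\phi_Y$ into the $t=T$ summand, reproduces the first displayed form.

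The second displayed form will follow from the first by purely algebraic manipulation: split each summand into its two pieces, reindex $u=t+1$ in the $\int m_{t+1}dQ_{t+1}$ part, peel off the $u=T+1$ term (which contributes $R_T Y$), recombine the remaining $u=1,\ldots,T$ summands with the $m_t$ piece under the shared weight $\prod_{s=0}^{t-1}dQ_s/d\Pb$, and subtract $m_0=\psi^*(\bQ)$. I expect the main obstacle to be the bookkeeping: correctly tracking nested integrals, weight products, and reindexings so that the partial integration collapsing the $s>t$ factors into $m_t$ lines up with the centering produced via the $m_{t-1}$ recursion, and so that the $\bQ$-over-$\Pb$ change of measure is applied consistently at each step. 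Finally, to confirm that the candidate $\phi$ is truly the EIF I would verify that it is mean-zero (each summand has conditional mean zero given its own past, by the recursion) and lies in the correct tangent subspace (each $\phi_{\bX_t}$ depends on $\bZ$ only through $\bH_t$ and is $(\bH_{t-1},A_{t-1})$-conditionally mean zero, hence in $\mathcal{T}_{\bX_t}$; analogously $\phi_Y\in\mathcal{T}_Y$), which together with the matched pathwise derivatives identifies $\phi$ uniquely in the nonparametric model.
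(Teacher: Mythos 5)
The paper does not actually supply a proof of this lemma: it states only that the result follows from ``chain rule arguments stemming from the fact that the efficient influence function is a pathwise derivative,'' and defers the details to future work. Your proposal is a correct and complete instantiation of exactly that strategy --- the orthogonal tangent-space decomposition into $\mathcal{T}_{\bX_t}$, $\mathcal{T}_{A_t}$, $\mathcal{T}_Y$, the observation that $\psi^*(\bQ)$ does not depend on the propensities when $\bQ$ is fixed (so $\phi_{A_t}=0$), the change of measure producing the cumulative weights $R_t$, and the centering by $m_{t-1}$ justified by the sequential-regression recursion all check out, as does the algebraic passage between the two displayed forms. The only small item worth making explicit in a written-up version is the verification that the cross terms vanish (e.g.\ $\E[\phi_{\bX_t}S_{\bX_s}]=0$ for $s\neq t$), which follows immediately from the orthogonality of the conditional-score subspaces; with that noted, matching each partial pathwise derivative component-by-component identifies the EIF uniquely in the nonparametric model.
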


\begin{lemma}
\label{lem:contribution}
Suppose $\bQ$ depends on $\Pb$, and let $\{\one(\bH_t=\bh_t)/d\Pb(\bh_t) \} \phi_t(\bH_t, A_t; a_t)$ 
denote the efficient influence function for $dQ_t(a_t \mid \bh_t)$. Then the efficient influence for $\E(Y^\bQ)$ allowing $\bQ$ to depend on $\Pb$ is given by
$$ \varphi^*(\bQ) + \sum_{t=1}^T \left\{ \prod_{s=0}^{t-1}  \frac{ dQ_s(A_s \mid \bH_s) }{ d\Pb(A_s \mid \bH_s) } \right\} \int_{\mathcal{A}_t} \phi_t(\bH_t,A_t; a_t) m_t( \bH_t, a_t) \ d\nu(a_t) $$
where $\varphi^*(\bQ)$ denotes the efficient influence function from Lemma 2 under an intervention $\bQ$ not depending on $\Pb$, and $\nu$ is a dominating measure for the distribution of $A_t$.
\end{lemma}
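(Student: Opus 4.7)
The plan is to use a chain-rule decomposition for pathwise derivatives, since here $\bQ = \bQ_\Pb$ is itself a functional of $\Pb$. I write $\psi^*(\bQ_\Pb) = \tilde\psi(\Pb, \bQ_\Pb)$, where $\tilde\psi(\Pb, \bQ) = \int \mu(\bh_T,a_T) \prod_t dQ_t(a_t\mid\bh_t) \prod_t d\Pb(\bx_t\mid\bh_{t-1},a_{t-1})$ is the identifying functional from Lemma 2 when $\bQ$ is treated as fixed. For a regular parametric submodel $\Pb_\epsilon$ with score $s(\bZ)$ at $\epsilon = 0$, the chain rule gives
$$ \frac{d}{d\epsilon}\psi^*(\bQ_{\Pb_\epsilon})\Big|_{\epsilon=0} = \frac{d}{d\epsilon}\tilde\psi(\Pb_\epsilon, \bQ_{\Pb_0})\Big|_{\epsilon=0} + \frac{d}{d\epsilon}\tilde\psi(\Pb_0, \bQ_{\Pb_\epsilon})\Big|_{\epsilon=0}. $$
The first piece is exactly the pathwise derivative handled by Lemma 2, so it contributes $\E\{\varphi^*(\bZ;\bQ)\, s(\bZ)\}$. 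The entire task therefore reduces to showing that the second piece equals $\E\{\Delta(\bZ)\, s(\bZ)\}$, where $\Delta(\bZ)$ denotes the sum in the statement.

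For the second piece, I differentiate under the integral through each factor $dQ_{t,\Pb_\epsilon}$ in turn, yielding
$$ \sum_{t=1}^T \int \mu(\bh_T,a_T)\,\dot{dQ_t}(a_t\mid\bh_t)\, \prod_{r\neq t} dQ_r(a_r\mid\bh_r)\prod_r d\Pb_0(\bx_r\mid\bh_{r-1},a_{r-1})\, d\nu, $$
where $\dot{dQ_t}(a_t\mid\bh_t) = \partial_\epsilon dQ_{t,\Pb_\epsilon}(a_t\mid\bh_t)\mid_{\epsilon=0}$. By the given form of the EIF for $dQ_t(a_t\mid\bh_t)$,
$$ \dot{dQ_t}(a_t\mid\bh_t) = \int \frac{\one(\bH_t=\bh_t)}{d\Pb(\bh_t)}\,\phi_t(\bH_t, A_t;a_t)\, s(\bZ)\, d\Pb(\bZ). $$
Substituting and applying Fubini to pull $\int s(\bz)\,d\Pb(\bz)$ to the outside leaves, inside the bracket, a function of $\bZ$ that must be shown to equal $\Delta(\bZ)$.

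The key bookkeeping step is then the following: the factor $\one(\bH_t=\bh_t)/d\Pb(\bh_t)$ collapses the inner integration over $(\overline\bx_t,\overline{a}_{t-1})$ to the single point $\bh_t = \bH_t$, at which the density factors $\prod_{r\leq t} d\Pb_0(\bx_r\mid\cdot)\prod_{r<t} dQ_r(a_r\mid\bh_r)$ aggregate into a numerator $c(\bH_t)$. Dividing by $d\Pb(\bH_t) = \prod_{r\leq t} d\Pb_0(\bx_r\mid\cdot)\prod_{r<t} d\Pb_0(a_r\mid\bh_r)$ cancels the covariate densities and leaves exactly the cumulative likelihood ratio $\prod_{s=0}^{t-1} dQ_s(A_s\mid\bH_s)/d\Pb(A_s\mid\bH_s)$. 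The remaining integral over the future of time $t$,
$$ \int \mu(\bh_T,a_T) \prod_{r>t} dQ_r(a_r\mid\bh_r)\,d\Pb_0(\bx_r\mid\cdot),$$
is precisely the definition of $m_t(\bH_t, a_t)$. What is left is then the integration against $\phi_t(\bH_t, A_t; a_t)$ over $a_t \in \mathcal{A}_t$, producing $\int_{\mathcal{A}_t}\phi_t(\bH_t,A_t;a_t)\, m_t(\bH_t,a_t)\, d\nu(a_t)$, matching the correction term in the statement.

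The hard part will be the density bookkeeping in collapsing the indicator and verifying that the ratio $c(\bH_t)/d\Pb(\bH_t)$ telescopes cleanly to the stated product of likelihood ratios; this requires care to line up the dominating measures for $\bH_t$ with those for the various conditional densities (most transparently in the discrete case, where one can argue with counting measure, and then extended in the usual way). A secondary obstacle is the formal justification of differentiating under the integral and of Fubini, both of which are standard in nonparametric semiparametric efficiency theory but which require the usual regularity on the submodel and integrability of the products involved, together with the assumed existence of the EIF for $dQ_t(a_t\mid\bh_t)$ in the stated form.
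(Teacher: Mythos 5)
Your proposal is correct and follows essentially the same route the paper indicates: the paper omits the detailed proof of this lemma but states explicitly that it rests on a chain-rule argument using the pathwise-derivative characterization of the efficient influence function, which is precisely your decomposition into the $\Pb$-variation term (recovering Lemma~\ref{lem:eif_qknown}) and the $\bQ$-variation term (yielding the correction). Your bookkeeping for the correction term --- substituting the defining pathwise-derivative identity for $dQ_t$, applying Fubini, collapsing the indicator so the covariate densities cancel against $d\Pb(\bh_t)$ to leave the cumulative ratio $\prod_{s<t} dQ_s/d\Pb(A_s\mid\bH_s)$, and recognizing the remaining future integral as $m_t(\bH_t,a_t)$ --- is the right calculation and lands on the stated expression.
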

 
The proofs of Lemmas \ref{lem:eif_qknown} and \ref{lem:contribution} are based on chain rule arguments stemming from the fact that the efficient influence function is a pathwise derivative. In particular (in a nonparametric model) the efficient influence function for parameter $\psi=\psi(\Pb)$ is the function $\varphi(\Pb)$ satisfying
$$ \frac{\partial}{\partial \epsilon} \psi(\Pb_\epsilon) \Bigm|_{\epsilon=0} = \int \varphi(\Pb) \left( \frac{\partial}{\partial \epsilon} \log d\Pb_\epsilon \right) \Bigm|_{\epsilon=0} d\Pb $$
where $\{\Pb_\epsilon : \epsilon \in \R\} $ is a smooth parametric submodel with $\Pb_{\epsilon=0}=\Pb$. We omit the proofs since they are lengthy and not particularly illuminating; however we plan to include them in a forthcoming paper on general stochastic interventions.

\begin{lemma}
\label{lem:eif_contribution}
The efficient influence function for 
$$dQ_t(a_t \mid \bh_t) = \frac{a_t \delta \pi_t(\bh_t) + (1-a_t) \{1-\pi(\bh_t)\}}{\delta \pi_t(\bh_t) + 1-\pi_t(\bh_t)} $$
is given by $\{\one(\bH_t=\bh_t)/d\Pb(\bh_t) \} \phi_t(\bH_t, A_t; a_t)$ where $\phi_t(\bH_t, A_t; a_t)$ equals
$$ \frac{ (2 a_t-1) \delta \{A_t - \pi_t(\bH_t)\}}{ \{ \delta \pi_t(\bH_t) + 1-\pi_t(\bH_t)\}^2}  . $$
\end{lemma}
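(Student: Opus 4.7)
The plan is to exploit the fact that, at any fixed $(a_t,\bh_t)$, the functional $dQ_t(a_t \mid \bh_t)$ depends on the law $\Pb$ only through the scalar $\pi_t(\bh_t)=\E(A_t\mid\bH_t=\bh_t)$. Once that is recognized, the EIF follows from the chain rule for pathwise derivatives: if $\tau(\Pb)=f(\theta(\Pb))$ for a smooth real map $f$ and a parameter $\theta$ with EIF $\varphi_\theta$, then $\tau$ has EIF $f'(\theta(\Pb))\,\varphi_\theta$. So the proof reduces to two ingredients: the scalar derivative of $f$, and the (formal) EIF for the pointwise conditional mean $\pi_t(\bh_t)$.

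First I would compute the derivative explicitly. Writing $dQ_t(a_t\mid\bh_t)=f(\pi_t(\bh_t))$ with
\[ f(p) \;=\; \frac{a_t\,\delta p + (1-a_t)(1-p)}{\delta p + 1-p}, \]
a quotient-rule calculation handling the cases $a_t=1$ and $a_t=0$ separately, then combining via the factor $(2a_t-1)$, gives
\[ f'(p) \;=\; \frac{(2a_t-1)\,\delta}{(\delta p + 1-p)^2}. \]
This is the outer factor that will appear in the chain rule.

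Next I would invoke the (formal) EIF of the pointwise conditional expectation $\pi_t(\bh_t)$, namely
\[ \frac{\one(\bH_t=\bh_t)}{d\Pb(\bh_t)}\,\{A_t-\pi_t(\bH_t)\}, \]
which can be viewed either as the Gateaux derivative of $\pi_t(\bh_t)$ in the direction of a point-mass contamination at a single observation, or as the object satisfying the defining pathwise-derivative identity against any smooth parametric submodel. Multiplying $f'(\pi_t(\bh_t))$ against this yields exactly
\[ \frac{\one(\bH_t=\bh_t)}{d\Pb(\bh_t)} \,\cdot\, \frac{(2a_t-1)\delta\{A_t-\pi_t(\bH_t)\}}{\{\delta\pi_t(\bH_t)+1-\pi_t(\bH_t)\}^2}, \]
which is the claimed representation with $\phi_t$ as stated.

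The main obstacle is bookkeeping around the fact that $dQ_t(a_t\mid\bh_t)$ is a pointwise evaluation rather than a genuine real-valued functional, so the Dirac-type factor $\one(\bH_t=\bh_t)/d\Pb(\bh_t)$ must be interpreted formally---exactly as elsewhere in the Appendix, where it is only ever used inside an outer integral against $m_t(\bH_t,a_t)$ in Lemma~\ref{lem:contribution}. To make this rigorous I would differentiate $dQ_{t,\epsilon}(a_t\mid\bh_t)$ along a smooth submodel $\Pb_\epsilon$, use $\partial_\epsilon\pi_{t,\epsilon}(\bh_t)|_{0}=\E[\{A_t-\pi_t(\bH_t)\}\,s(\bZ)\mid\bH_t=\bh_t]$ for the score $s=\partial_\epsilon\log d\Pb_\epsilon|_{0}$, and check that the resulting derivative of $dQ_t$ matches the inner product of $\phi_t$ against $s$ under the appropriate pointwise normalization. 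This is routine given $f'$ and the conditional-mean score identity, so the calculation is short once the chain-rule reduction is in hand.
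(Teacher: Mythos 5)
Your proposal is correct and follows essentially the same route as the paper, which proves this lemma by the chain rule combined with the formal efficient influence function $\one(\bH_t=\bh_t)\{A_t-\pi_t(\bh_t)\}/d\Pb(\bh_t)$ for the pointwise conditional mean $\pi_t(\bh_t)$; your explicit computation of $f'(p)=(2a_t-1)\delta/(\delta p+1-p)^2$ simply fills in the quotient-rule step the paper leaves implicit.
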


\begin{proof}
This result also follows from the chain rule, together with the fact that the efficient influence function for $\pi_t$ is given by
$$ \one(\bH_t=\bh_t) \{A_t - \pi_t(\bh_t)\} / d\Pb(\bh_t) . $$ 
\end{proof}

\subsection{Z-Estimator Algorithm}
\label{sec:algorithm}

\begin{algorithm}[Z-estimator algorithm] For each $\delta$:
\begin{enumerate} 
\item Regress $A_t$ on $\bH_t$, obtain predicted values $\hat\pi_t(\bH_t)$ for each subject/time.
\item Construct time-dependent weights $W_t = \frac{\delta A_t + 1-A_t}{\delta \hat\pi_t(\bH_t) + 1-\hat\pi_t(\bH_t)}$ for each subject/time.
\item Calculate cumulative product weight $\widetilde{W}_t = \prod_{s=1}^t W_s$ for each subject/time.
\item For each time $t=T, T-1, ..., 1$ (starting with $R_{T+1}=Y$):
\begin{enumerate}
\item Regress $R_{t+1}$ on $(\bH_t,A_t)$, obtain predicted values $\hat{m}_t(\bH_t,1)$ and $\hat{m}_t(\bH_t,0)$.
\item Construct pseudo-outcome $R_t=\frac{ \delta \hat\pi_t(\bH_t) \hat{m}_t(\bH_t,1) + \{1- \hat\pi_t(\bH_t)\} \hat{m}_t(\bH_t,0) } { \delta \hat\pi_t(\bH_t) + 1 - \hat\pi_t(\bH_t) }$.
\end{enumerate}
\item For each subject compute $\varphi = \widetilde{W}_T Y + \sum_t \widetilde{W}_t V_t R_t $ where $V_t = \frac{A_t \{1- \hat\pi_t(\bH_t)\} - (1-A_t) \delta \hat\pi_t(\bH_t) }{ \delta/(1-\delta)}$.
\item Set $\hat\psi^*(\delta)$ to be the average of the $\varphi$ values across subjects.
\end{enumerate}
\end{algorithm}

\subsection{Proof of Theorem \ref{thm:unif_clt}}
\label{proof:unif_clt}

Let $\| f \|_\mathcal{D} = \sup_{\delta \in \mathcal{D}} | f(\delta) |$ denote the supremum norm over $\mathcal{D}$, and define the processes
\begin{align*}
\widehat\Psi_n(\delta) &= \sqrt{n} \{\hat\psi(\delta) - \psi(\delta)\} / \hat\sigma(\delta) \\
\widetilde\Psi_n(\delta) &= \sqrt{n} \{\hat\psi(\delta) - \psi(\delta)\} / \sigma(\delta) \\
\Psi_n(\delta) &= \Gn[ \{ \varphi(\bZ;\boldsymbol{\eta},\delta)-\psi(\delta)\} /\sigma(\delta) ]  = \Gn\{ \widetilde\varphi(\bZ;\boldsymbol\eta,\delta)\}
\end{align*}
where $\Gn=\sqrt{n}(\Pn-\Pb)$ is the empirical process on the full sample process as usual. Also let $\Gb(\delta)$ denote the mean-zero Gaussian process with covariance $\E\{ \widetilde\varphi(\bZ;\boldsymbol\eta,\delta_1)\widetilde\varphi(\bZ;\boldsymbol\eta,\delta_2) \}$ as in the main text. \\

In this proof we will show that
$$ \Psi_n(\cdot) \indist \Gb(\cdot) \text{ in } \ell^\infty(\mathcal{D})  \ \text{ and } \ \| \widehat\Psi_n - \Psi_n \|_\mathcal{D} = o_\Pb(1)  $$
which yields the desired result. The first statement will be true if the influence function $\widetilde\varphi$ is a smooth enough function of $\delta$, and the second if the nuisance estimators $\hat{\boldsymbol\eta}$ are consistent and converging at a sufficiently fast rate. \\

The first statement follows since the function class $\mathcal{F}_{\boldsymbol{\overline\eta}} = \{\varphi(\cdot; \boldsymbol{\overline\eta}, \delta) : \delta \in \mathcal{D}\}$ is Lipschitz and thus has a finite bracketing integral for any fixed $\boldsymbol{\overline\eta}$. Recall the $L_2(\Pb)$ bracketing integral of class $\mathcal{F}$ with envelope $F$ is given by
$$ J_{[ \ \! ]}(\mathcal{F}) = \int_0^1 \sqrt{ 1+ \log N_{[ \ \! ]}(\epsilon \| F\| , \mathcal{F}, L_2(\Pb))} \ d\epsilon  $$
where $N_{[ \ \! ]}(\epsilon , \mathcal{F}, L_2(\Pb))$ is the $L_2(\Pb)$ bracketing number, i.e., the minimum number of $\epsilon$ brackets in $L_2(\Pb)$ needed to cover the class $\mathcal{F}$ with envelope function $F$. That  $\mathcal{F}_{\boldsymbol{\overline\eta}}$ is Lipschitz (and thus the bracketing integral is finite) follows from the fact that $\varphi$ is a sum of products of Lipschitz functions and $\mathcal{D}$ is bounded.  We show this by showing that the corresponding derivatives are all bounded, specifically
\begin{align*}
\left| \frac{\partial}{\partial \delta} \left[ \frac{a_t \{1-\pi_t(\bh_t)\} - (1-a_t) \delta \pi_t(\bh_t) }{ \delta/(1-\delta)} \right] \right| &= \left| \frac{a_t\{1-\pi_t(\bh_t)\} }{\delta^2} - (1-a_t) \pi_t(\bh_t) \right| \leq 1 + 1/\delta_\ell^2 \\
 \left| \frac{\partial}{\partial \delta} \left[ \frac{ \delta \pi_t(\bh_t) m_t(\bh_t,1) + \{1-\pi_t(\bh_t)\} m_t(\bh_t,0) } { \delta \pi_t(\bh_t) + 1 - \pi_t(\bh_t) } \right] \right| &= \left| \frac{\pi_t(\bh_t) \{1-\pi_t(\bh_t)\} \{m_t(\bh_t,1)-m_t(\bh_t,0)\} }{\{ \delta \pi_t(\bh_t) + 1-\pi_t(\bh_t)\}^2} \right| \\
 &\leq  | m_t(\bh_t, 1) - m_t(\bh_t,0) | / \delta_\ell^2 \\
  \left| \frac{\partial}{\partial \delta} \left[ \frac{ \delta a_t + 1-a_t }{ \delta \pi_t(\bh_t) + 1 - \pi_t(\bh_t) } \right] \right| &= \left| \frac{a_t - \pi_t(\bh_t)}{\{ \delta \pi_t(\bh_t) + 1-\pi_t(\bh_t)\}^2} \right| \leq 1/\delta_\ell^2
\end{align*}
where we used the fact that, for all $0 \leq \pi_t(\bh_t) \leq 1$, we have
$$ \{ \delta \pi_t(\bh_t) + 1-\pi_t(\bh_t)\} \in [ \delta \wedge 1 , \delta \vee 1 ] \subseteq [\delta_\ell, \delta_u] . $$
Therefore $\Psi_n(\cdot) \indist \Gb(\cdot)$ since a function class with finite bracketing integral is necessarily Donsker (e.g., Theorem 2.5.6 in \textcite{van1996weak}).

Now we consider the second statement, that $\| \widehat\Psi_n - \Psi_n \|_\mathcal{D} =o_\Pb(1)$.  First note that
\begin{align*}
\| \widehat\Psi_n - \Psi_n \|_\mathcal{D} &= \| (\widetilde\Psi_n - \Psi_n) (\sigma/\hat\sigma) + \Psi_n (\sigma-\hat\sigma) / \hat\sigma \|_\mathcal{D} \\
&\leq \| \widetilde\Psi_n - \Psi_n \|_\mathcal{D} \| \sigma/\hat\sigma \|_\mathcal{D} + \| \sigma/\hat\sigma - 1\|_\mathcal{D} \|\Psi_n \|_\mathcal{D} \\
& \lesssim \| \widetilde\Psi_n - \Psi_n \|_\mathcal{D}  + o_\Pb(1)
\end{align*}
where the last inequality follows since $\|\hat\sigma/\sigma - 1\|_\mathcal{D}=o_\Pb(1)$ by Assumption 3 of Theorem \ref{thm:unif_clt}, and $\|\Psi_n \|_\mathcal{D}=O_\Pb(1)$ follows from, e.g., Theorem 2.14.2 in \textcite{van1996weak}, since the function class $\mathcal{F}_{\boldsymbol\eta}$ has finite bracketing integral as shown above. \\

Now let $N=n/K$ be the sample size in any group $k=1,...,K$, and denote the empirical process over group $k$ units by $\Gn^k=\sqrt{N}(\Pn^k - \Pb)$. Then we have 
\begin{align*}
\widetilde\Psi_n(\delta) &- \Psi_n(\delta) = \frac{ \hat\psi(\delta) - \psi(\delta) }{\sigma(\delta) / \sqrt{n} } -  \Gn \{ \widetilde\varphi(\bZ;\boldsymbol\eta, \delta) \} \\
&= \frac{\sqrt{n}}{\sigma(\delta)} \frac{1}{K} \sum_{k=1}^K \Big[ \Pn^k\{ \varphi(\bZ;\hat{\boldsymbol\eta}_{\text{-}k},\delta)\} - \psi(\delta) - (\Pn-\Pb)\varphi(\bZ;\boldsymbol\eta,\delta) \Big] \\
&= \frac{\sqrt{n}}{K \sigma(\delta)} \sum_{k=1}^K \Big[ \frac{1}{\sqrt{N}} \Gn^k \Big\{ \varphi(\bZ; \hat{\boldsymbol\eta}_{\text{-}k},\delta) -\varphi(\bZ;\boldsymbol\eta, \delta) \Big\} + \Pb\Big\{ \varphi(\bZ; \hat{\boldsymbol\eta}_{\text{-}k},\delta) -  \varphi(\bZ;\boldsymbol\eta,\delta) \Big\} \Big] \\
&\equiv B_{n,1}(\delta) + B_{n,2}(\delta)
 \end{align*}
where the first two equalities follow by definition, and the third by rearranging and noting that 
$ \psi(\delta) = \Pb\{\varphi(\bZ;\boldsymbol\eta,\delta)\} $
and 
$ \sum_k \Pn^k\{\varphi(\bZ;\boldsymbol\eta,\delta)\} = \sum_k \Pn\{\varphi(\bZ;\boldsymbol\eta,\delta)\} $. Now we will analyze the two pieces $B_{n,1}$ and $B_{n,2}$ in turn; showing that their supremum norms are both $o_\Pb(1)$ completes the proof. \\

For $B_{n,1}$, we have by the triangle inequality and since $K$ is fixed (independent of total sample size $n$), that
\begin{align*}
\| B_{n,1} \|_\mathcal{D} &= \sup_{\delta \in \mathcal{D}} \left| \frac{1}{\sqrt{K} \sigma(\delta)} \sum_{k=1}^K \Gn^k \Big\{ \varphi(\bZ; \hat{\boldsymbol\eta}_{\text{-}k},\delta) -\varphi(\bZ;\boldsymbol\eta, \delta) \Big\} \right| \\
&\lesssim \max_k \sup_{f \in \mathcal{F}_n^k} \left| \Gn(f) \right| , 
\end{align*}
where $\mathcal{F}_n^k = \mathcal{F}_{\hat{\boldsymbol\eta}_{\text{-}k}} - \mathcal{F}_{\boldsymbol{\eta}}$ for the function class $\mathcal{F}_{\boldsymbol\eta} = \{ \varphi(\cdot; \boldsymbol\eta, \delta) : \delta \in \mathcal{D} \}$ from before. Viewing the nuisance functions $\hat{\boldsymbol\eta}_{\text{-}k}$ as fixed given the training data $\bD_0^k = \{\bZ_i : S_i \neq k\}$, we can apply Theorem 2.14.2 in \textcite{van1996weak} to obtain
$$ \E \left\{ \sup_{f \in \mathcal{F}_n^k} \left| \Gn(f) \right|  \Bigm| \bD_0^k \right\} \lesssim \| F_n^k \| \int_0^1 \sqrt{ 1 + \log N_{[ \ \! ]}(\epsilon  \| F_n^k \|, \mathcal{F}_n^k , L_2(\Pb) )} \ d\epsilon   $$
for envelope $F_n^k$. If we take $F_n^k(\bz)=\sup_{\delta \in \mathcal{D}} | \varphi(\bz; \hat{\boldsymbol\eta}_{\text{-}k},\delta) - \varphi(\bz; \boldsymbol\eta,\delta)| $ then the first term $\| F_n^k \|$ in the product above is $o_\Pb(1)$. Although the bracketing integral is finite for any fixed $\boldsymbol\eta$, here the function class depends on $n$ through $\hat{\boldsymbol\eta}_{\text{-}k}$ so we need a more careful analysis.

Specifically, since $\mathcal{F}_n^k$ is Lipschitz, by Theorem 2.7.2 of \textcite{van1996weak} we have
$$ \log N_{[ \ \! ]}(\epsilon  \| F_n^k \|, \mathcal{F}_n^k , L_2(\Pb) ) \lesssim \frac{1}{\epsilon  \| F_n^k \|} . $$
Therefore, letting $C_n^k= \| F_n^k \|$,
\begin{align*}
\| F_n^k \|  \int_0^1 &\sqrt{ 1 + \log N_{[ \ \! ]}(\epsilon  \| F_n^k \|, \mathcal{F}_n^k , L_2(\Pb) )} \ d\epsilon \lesssim C_n^k \int_0^1 \sqrt{ 1 + \frac{1}{\epsilon C_n^k} } \ d\epsilon \\
&= C_n^k \sqrt{1+\frac{1}{C_n^k}} + \frac{1}{2C_n^k} \log\left\{ 1 + 2C_n^k \left( 1+ \sqrt{1+\frac{1}{C_n^k}} \right) \right\} \\
&= \sqrt{ C_n^k(C_n^k + 1)} + (1/2) \log\left\{ 1 + 2C_n^k \left( 1+ \sqrt{1+\frac{1}{C_n^k}} \right) \right\} 
\end{align*}
which tends to zero as $C_n^k \rightarrow 0$. Hence $\sup_{f \in \mathcal{F}_n^k} \left| \Gn(f) \right| = o_\Pb(1)$ for each $k$, and since there are only finitely many splits $K$, we have
$$ \| B_{n,1} \|_\mathcal{D} = o_\Pb(1) . $$

To analyze $B_{n,2}(\delta)$ we require some new notation, and at first we typically suppress any dependence on $\delta$ for simplicity. Let $\psi(\Pb;Q)$ denote the mean outcome under intervention $Q$ for a population corresponding to observed data distribution $\Pb$, and let $\varphi^*(\bz;\boldsymbol{\eta})$ denote its (centered) efficient influence function when $Q$ does not depend on $\Pb$, as given in Lemma \ref{lem:eif_qknown}, which depends on nuisance functions $\boldsymbol{\eta} = (\mathbf{m},\boldsymbol\pi) = (m_0, m_1, ..., m_T, \pi_1, \pi_2, ..., \pi_T)$. Similarly let $\zeta(\bz;\boldsymbol{\eta})$ denote the contribution to the efficient influence function $\varphi^*(\bz;\boldsymbol\eta)$ due to estimating $Q$ when it depends on $\Pb$, as given in Lemma \ref{lem:contribution}. Then by definition
$$ \varphi(\bz; \boldsymbol\eta,\delta) = \varphi^*(\bz;\boldsymbol{\eta}) + \psi(\Pb;Q) + \zeta(\bz;\boldsymbol{\eta}) . $$
Hence, for any $\boldsymbol{\overline\eta}$ we can write $(1/\sqrt{n}) B_{n,2}(\delta)$ as
\begin{align*}
\Pb\Big\{ \varphi(\bZ; \boldsymbol{\overline\eta},\delta) &-  \varphi(\bZ;\boldsymbol\eta,\delta) \Big\} = \int \{ \varphi^*(\bz; \boldsymbol{\overline\eta}) + \zeta(\bz;\boldsymbol{\overline\eta}) + \psi(\overline\Pb,\overline{Q}) \} \ d\Pb(\bz)  - \psi(\Pb,Q) \\
&= \int \varphi^*(\bz; \boldsymbol{\overline\eta}) \ d\Pb(\bz) + \psi(\overline\Pb;\overline{Q}) - \psi(\Pb; \overline{Q}) \\
& \hspace{.4in} + \int \zeta(\bz;\boldsymbol{\overline\eta}) \ d\Pb(\bz) + \psi(\Pb;\overline{Q}) - \psi(\Pb;Q)
\end{align*}
where the first equality follows by definition and the second by rearranging. \\

In the following lemmas we analyze these two components of the remainder term $B_{n,2}(\delta)$. Our results keep the intervention distribution $Q$ completely general, and so can be applied to study other stochastic interventions, beyond those we focus on in this paper of the incremental propensity score variety.

\begin{lemma}
\label{lem:rem_qknown}
Let $\psi(\Pb;Q)$ denote the mean outcome under intervention $Q$ for a population corresponding to observed data distribution $\Pb$, and let $\varphi^*(\bz;\boldsymbol{\eta})$ denote its efficient influence function when $Q$ does not depend on $\Pb$, as given in Lemma \ref{lem:eif_qknown}, which depends on nuisance functions $\boldsymbol{\eta} = (\mathbf{m},\boldsymbol\pi) = (m_0, m_1, ..., m_T, \pi_1, \pi_2, ..., \pi_T)$. Then for two distributions $\Pb$ and $\overline\Pb$ (the latter with corresponding nuisance functions $\boldsymbol{\overline\eta}$) we have the expansion
\begin{align*}
\psi(\overline\Pb;Q) - \psi(\Pb;Q) &+ \int \varphi^*(\bz; \boldsymbol{\overline\eta}) \ d\Pb(\bz) \\
 &= \sum_{t=1}^T \sum_{s =1}^t \int \left( m_t^* - \overline{m}_t \right)\left( \frac{d\pi_s-d\overline\pi_s}{d\overline\pi_s} \right) \left( \prod_{r=1}^{s-1}   \frac{d\pi_r}{d\overline\pi_r} \right)  \left( \prod_{r=1}^t dQ_r \ d\Pb_r \right) 
\end{align*}
where we define
$$ \overline{m}_t = \overline{m}_t(\bH_t,A_t) = \int \overline{m}_{t+1} \ dQ_{t+1} \ d\overline\Pb_{t+1} \ \ , \ \ m_t^* = \int \overline{m}_{t+1} \ dQ_{t+1} \ d\Pb_{t+1} , $$
$$ dQ_t = dQ_t(A_t \mid \bH_t) \ , \  d\pi_t = d\Pb(A_t \mid \bH_t) \ , \ d\Pb_t = d\Pb(\bX_{t} \mid \bH_{t-1},A_{t-1}) . $$
\end{lemma}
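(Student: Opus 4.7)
The plan is a direct computation. I would plug the influence function from Lemma 2 (evaluated at the nuisance parameters $\boldsymbol{\overline\eta}$) into $\int \varphi^*(\bz; \boldsymbol{\overline\eta})\,d\Pb(\bz)$ and simplify by iterated expectation. For the summand indexed by $t$, the brace $\{\int \overline m_{t+1}\,dQ_{t+1} - \overline m_t\}$ depends only on $(\bH_{t+1}, A_t)$ while the cumulative weight $\prod_{s=0}^t dQ_s/d\overline\pi_s$ depends only on $(\bH_t, A_t)$. Hence the variables $(A_{t+1}, \bX_{t+2}, \ldots, A_T, Y)$ integrate trivially to $1$ under $\Pb$, and integrating $\bX_{t+1}$ against $d\Pb_{t+1}$ turns $\int \overline m_{t+1}\,dQ_{t+1}$ into $m_t^*(\bH_t, A_t)$. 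After expanding $d\Pb(\bH_t, A_t) = \prod_{r=1}^t d\pi_r\,d\Pb_r$, this reduces the whole expression to
\[
\int \varphi^*(\bz; \boldsymbol{\overline\eta})\,d\Pb(\bz) \;=\; \sum_{t=0}^T \int (m_t^* - \overline m_t)\,\prod_{s=1}^t \frac{d\pi_s}{d\overline\pi_s}\,\prod_{r=1}^t dQ_r\,d\Pb_r.
\]

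Next I would apply the elementary telescoping identity
\[
\prod_{s=1}^t \frac{d\pi_s}{d\overline\pi_s} \;=\; 1 \;+\; \sum_{s=1}^t \left(\prod_{r=1}^{s-1} \frac{d\pi_r}{d\overline\pi_r}\right)\frac{d\pi_s - d\overline\pi_s}{d\overline\pi_s},
\]
which decomposes the weight ratio into a sum of one-at-a-time propensity changes. Substituting splits the previous display into a ``constant $1$'' piece and a double sum. The double sum is exactly the right-hand side of Lemma 4, so the proof reduces to showing that the constant-$1$ piece, $\sum_{t=0}^T \int (m_t^* - \overline m_t)\,\prod_{r=1}^t dQ_r\,d\Pb_r$, equals $\psi(\Pb;Q) - \psi(\overline\Pb;Q)$.

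For this last identity I would telescope a second time, over $t$. Setting $I_t = \int \overline m_t\, \prod_{r=1}^t dQ_r\, d\Pb_r$ (so $I_0 = \overline m_0 = \psi(\overline\Pb;Q)$), the definition $m_t^* = \int \overline m_{t+1}\, dQ_{t+1}\, d\Pb_{t+1}$ immediately gives $\int m_t^*\, \prod_{r=1}^t dQ_r\, d\Pb_r = I_{t+1}$ for $t < T$, so the terms for $t = 0,\ldots,T-1$ collapse to $I_T - I_0$. Under the Lemma 2 convention $\overline m_{T+1} = Y$, $dQ_{T+1} \equiv 1$, one has $m_T^* = \mu(\bH_T, A_T)$, so the boundary term at $t = T$ contributes $\int \mu\, \prod_{r=1}^T dQ_r\, d\Pb_r - I_T = \psi(\Pb;Q) - I_T$ by Lemma 1. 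Adding gives $\psi(\Pb;Q) - \psi(\overline\Pb;Q)$, and transposing this to the left-hand side recovers the statement of Lemma 4.

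The hard part of this proof is not conceptual but rather notational: keeping straight the four kinds of kernels ($d\pi_s$, $d\overline\pi_s$, $d\Pb_r$, $d\overline\Pb_r$) against a fifth ($dQ_s$), and pinning down the boundary conventions at $t = 0$ (empty product, $\overline m_0 = \psi(\overline\Pb;Q)$) and $t = T$ (where $m_T^*$ collapses to the true outcome regression $\mu$ under $\Pb$). Once those conventions are fixed, both telescoping identities are mechanical, and the ``second-order'' form of the remainder falls out naturally from the product-of-ratios decomposition.
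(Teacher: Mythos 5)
Your proposal is correct and follows essentially the same route as the paper's proof: reduce $\int \varphi^*(\bz;\boldsymbol{\overline\eta})\,d\Pb(\bz)$ by iterated expectation to $\sum_{t=0}^T\int(m_t^*-\overline m_t)\prod_{s=1}^t(d\pi_s/d\overline\pi_s)\prod_{r=1}^t dQ_r\,d\Pb_r$, peel off the propensity-ratio discrepancies one timepoint at a time to produce the double sum, and telescope the remaining piece to $\psi(\Pb;Q)-\psi(\overline\Pb;Q)$. Your closed-form identity for $\prod_s(d\pi_s/d\overline\pi_s)-1$ and the explicit $I_t$ bookkeeping with the $t=T$ boundary term are just a tidier packaging of the paper's iterative add-and-subtract argument.
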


\begin{proof}
First note that
\begin{align*}
\E\{\varphi^*(\bZ;\boldsymbol{\overline\eta})\} &= \E \sum_{t=0}^T \left( \int \overline{m}_{t+1} \ dQ_{t+1} - \overline{m}_t \right) \prod_{s=0}^t \left( \frac{dQ_s}{d\overline\pi_s} \right) \\
&= \E \sum_{t=0}^T  \left( \int \int \overline{m}_{t+1} \ dQ_{t+1} \ d\Pb_{t+1} - \overline{m}_t \right)  \prod_{s=0}^t  \left( \frac{dQ_s}{d\overline\pi_s} \right) \\
&= \E \sum_{t=0}^T  \left( m_t^* - \overline{m}_t \right)  \prod_{s=0}^t  \left( \frac{dQ_s}{d\overline\pi_s} \right) \\
&= \sum_{t=0}^T \int \left( m_t^* - \overline{m}_t \right)  \prod_{s=0}^t  \left( \frac{dQ_s}{d\overline\pi_s} \right) d\pi_s \ d\Pb_s
\end{align*}
where the first equality follows by definition, the second by iterated expectation (conditioning on $(\bH_t,A_t)$ and averaging over $\bX_{t+1}$), the third by definition of $m_t^*$, and the fourth by repeated iterated expectation. Now we have
\begin{align*}
\sum_{t=0}^T \int &\left( m_t^* - \overline{m}_t \right)  \prod_{s=0}^t  \left( \frac{dQ_s}{d\overline\pi_s} \right) d\pi_s \ d\Pb_s \\
&= \sum_{t=1}^T \int \left( m_t^* - \overline{m}_t \right)  \left( \frac{d\pi_t-d\overline\pi_t}{d\overline\pi_t} \right) dQ_t \ d\Pb_t  \prod_{s=0}^{t-1}  \left( \frac{d\pi_s}{d\overline\pi_s} \right) dQ_s  \ d\Pb_s \\
& \hspace{.4in} + \sum_{t=1}^T \int \left( m_t^* - \overline{m}_t \right) dQ_t \ d\Pb_t  \prod_{s=0}^{t-1}  \left( \frac{d\pi_s}{d\overline\pi_s} \right) dQ_s  \ d\Pb_s + (m_0^* - \overline{m}_0) \\
&= \sum_{t=1}^T \sum_{s=1}^{t} \int \left( m_t^* - \overline{m}_t \right) \left( \prod_{r=s}^t dQ_r \ d\Pb_r \right) \left( \frac{d\pi_s-d\overline\pi_s}{d\overline\pi_s} \right)  \prod_{r=1}^{s-1}  \left( \frac{d\pi_r}{d\overline\pi_r} \right) dQ_r  \ d\Pb_r \\
& \hspace{.4in} +  \sum_{t=1}^T \int \left( m_t^* - \overline{m}_t \right)  \prod_{s=1}^{t} dQ_s  \ d\Pb_s + (m_0^* - \overline{m}_0)
\end{align*}
where the first equality follows by adding and subtracting the second term in the sum (and separating the $t=0$ term), 
and the second follows by repeating this process $t$ times (where we use the convention that quantities at negative times like $dQ_{-1}$ are set to one). The last terms in the last line above are a telescoping sum since
\begin{align*}
\sum_{t=1}^T \int \left( m_t^* - \overline{m}_t \right)  \prod_{s=1}^{t} dQ_s  \ d\Pb_s &= \sum_{t=1}^T \left( \int  m_t^* \prod_{s=1}^{t} dQ_s  \ d\Pb_s - \int \overline{m}_t \ dQ_t \ d\Pb_t \prod_{s=0}^{t-1} dQ_s \ d\Pb_s \right)  \\
&= \sum_{t=1}^T \left( \int m_t^* \prod_{s=1}^{t} dQ_s  \ d\Pb_s - \int m_{t-1}^* \prod_{s=0}^{t-1} dQ_s \ d\Pb_s \right) \\
&= \sum_{t=1}^T \int m_t^* \prod_{s=1}^{t} dQ_s  \ d\Pb_s - \sum_{t=1}^{T-1} \int m_t^* \prod_{s=1}^{t} dQ_s \ d\Pb_s  - m_0^* \\
&= \int m_T^* \prod_{s=1}^{T} dQ_s  \ d\Pb_s - m_0^* = m_0 - m_0^* .
\end{align*}
Therefore the result follows after rearranging and noting $\psi_Q^*(\Pb)=m_0$ and $\psi_Q^*(\overline\Pb)=\overline{m}_0$.
\end{proof}

\begin{lemma}
\label{lem:rem_contribution} 
Using the same notation as in Lemma \ref{lem:rem_qknown}, let $\zeta(\bZ;\boldsymbol{\eta})$ denote the contribution to the efficient influence function $\varphi^*(\bZ;\boldsymbol\eta)$ as given in Lemma \ref{lem:contribution}. Then for two intervention distributions $Q$ and $\overline{Q}$ (assumed to have densities $dQ_t$ and $d\overline{Q}_t$, respectively, for $t=1,...,T$, with respect to some dominating measure)  we have the expansion
\begin{align*}
\psi(\Pb;\overline{Q}) &- \psi(\Pb;Q) + \int  \zeta(\bz;\boldsymbol{\overline\eta}) \ d\Pb(\bz) \\
&= \sum_{t=1}^T \int (\overline\phi_t \ d\pi_t) (\overline{m}_t - m_t) \ d\nu \ d\Pb_t \left( \prod_{s=0}^{t-1}  \frac{ d\overline{Q}_s }{ d\overline\pi_s } d\pi_s \ d\Pb_s \right) \\
& \hspace{.4in} + \sum_{t=1}^T \sum_{s=1}^t \int (\overline\phi_t \ d\pi_t) \left( \frac{d\pi_s-d\overline\pi_s}{d\overline\pi_s} \right)  m_t \ d\nu \ d\Pb_t \left( \prod_{r=0}^{t-1} d\overline{Q}_r \ d\Pb_r \right) \left( \prod_{r=0}^{s-1}  \frac{ d\pi_s }{ d\overline\pi_s } \right) \\
& \hspace{.4in} + \sum_{t=1}^T \int  m_t (d\overline{Q}_t - dQ_t + \overline\phi_t \ d\pi_t) \ d\nu \ d\Pb_t \left( \prod_{s=0}^{t-1} dQ_s  \ d\Pb_s \right)
\end{align*}
\end{lemma}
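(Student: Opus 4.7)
The plan is to prove the lemma by (i) giving a clean expansion for $\psi(\Pb;\overline{Q}) - \psi(\Pb;Q)$ via the g-formula and telescoping, (ii) computing $\int \zeta(\bz;\boldsymbol{\overline\eta}) d\Pb(\bz)$ via iterated expectations so that it is expressed in the same ``stacked product'' form, and then (iii) matching up terms through a second round of telescoping. I would first invoke Lemma 1 to write $\psi(\Pb;\overline{Q}) = \int \mu \prod_t d\overline{Q}_t \, d\Pb_t$ (and similarly for $\psi(\Pb;Q)$). Applying the product telescoping identity $\prod_t a_t - \prod_t b_t = \sum_t (\prod_{s<t} a_s)(a_t-b_t)(\prod_{s>t} b_s)$ to the integrand, and collapsing the $s>t$ integrals using the definition of $m_t$, gives
\begin{equation*}
\psi(\Pb;\overline{Q}) - \psi(\Pb;Q) = \sum_{t=1}^T \int m_t\, (d\overline{Q}_t - dQ_t)\, d\nu \, d\Pb_t \prod_{s=0}^{t-1} d\overline{Q}_s \, d\Pb_s.
\end{equation*}

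Next I would compute $\int \zeta(\bz;\boldsymbol{\overline\eta}) d\Pb(\bz)$ from the explicit form in Lemma \ref{lem:contribution}. Because the factor $\prod_{s<t} d\overline{Q}_s/d\overline\pi_s$ depends only on $(\bH_{t-1}, A_{t-1})$ and the inner $a'$-integral depends only on $(\bH_t, A_t)$, iterated expectation (marginalizing over $\bX_{t+1}, A_{t+1}, \dots$) gives
\begin{equation*}
\int \zeta \, d\Pb \;=\; \sum_{t=1}^T \int (\overline\phi_t\, d\pi_t)\, \overline{m}_t \, d\nu\, d\Pb_t \prod_{s=0}^{t-1} \frac{d\overline{Q}_s}{d\overline\pi_s}\, d\pi_s\, d\Pb_s,
\end{equation*}
where $(\overline\phi_t \, d\pi_t)(\bh_t; a') = \int \overline\phi_t(\bh_t,a_t;a') \, d\pi_t(a_t \mid \bh_t)$. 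I would then split $\overline{m}_t = (\overline{m}_t - m_t) + m_t$, which immediately produces Term 1 of the RHS, leaving me with a residual in which $\overline{m}_t$ is replaced by $m_t$.

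The remaining step is to combine this residual with the g-formula telescoping to recover Terms 2 and 3. The bookkeeping device is the identity
\begin{equation*}
\prod_{s=0}^{t-1} \frac{d\pi_s}{d\overline\pi_s} \;-\; 1 \;=\; \sum_{r=0}^{t-1} \left( \prod_{u=0}^{r-1} \frac{d\pi_u}{d\overline\pi_u}\right) \frac{d\pi_r - d\overline\pi_r}{d\overline\pi_r},
\end{equation*}
which converts the $\overline\pi$-weighted product in the $m_t$-residual into a $dQ$-weighted product plus a sum of $(\pi - \overline\pi)$-deviation terms; the deviation terms are exactly Term 2 (after a symmetric telescoping on the $\prod d\overline{Q}_s$ factor in the $\psi(\Pb;\overline{Q})-\psi(\Pb;Q)$ piece so that every remaining product is over $dQ_s \, d\Pb_s$). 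Gathering the $m_t$ coefficients that survive this reduction produces precisely $(d\overline{Q}_t - dQ_t + \overline\phi_t \, d\pi_t)$, which is Term 3.

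The main obstacle is algebraic: two telescoping expansions (one across timepoints, one within the $\overline\pi$-to-$\pi$ swap) interact and have to be aligned so that (a) the Von Mises remainder $d\overline{Q}_t - dQ_t + \overline\phi_t\, d\pi_t$ emerges cleanly from the leftover $m_t$ factor, and (b) the $\pi$-error terms are correctly reindexed to match Term 2's double sum over $(t,s)$ with the nested $\prod_{r<t} d\overline{Q}_r$ and $\prod_{r<s} d\pi_r/d\overline\pi_r$ products. The rest of the derivation is mechanical and parallels Lemma \ref{lem:rem_qknown}; the weak positivity constraint $d\Pb(a_t \mid \bh_t) = 0 \Rightarrow d\overline{Q}_t(a_t \mid \bh_t) = 0$ inherited from Lemma 1 ensures all the ratios $d\overline{Q}_s/d\overline\pi_s$ and $d\pi_s/d\overline\pi_s$ that appear are well-defined $d\Pb$-almost everywhere.
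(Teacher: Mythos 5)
Your proposal follows essentially the same route as the paper's own proof: telescope $\psi(\Pb;\overline{Q})-\psi(\Pb;Q)$ across timepoints via the g-formula, compute $\int\zeta\,d\Pb$ by iterated expectation so it appears in the same stacked-product form, split $\overline{m}_t=(\overline{m}_t-m_t)+m_t$ to isolate Term 1, and then apply the same product-telescoping identity used in Lemma \ref{lem:rem_qknown} to the $\prod_s d\pi_s/d\overline\pi_s$ factor to generate Term 2 and collapse the leftover $m_t$ pieces into the second-order combination $d\overline{Q}_t-dQ_t+\overline\phi_t\,d\pi_t$ of Term 3. The only (minor) difference is that you flag explicitly the extra $d\overline{Q}_s\to dQ_s$ alignment of the weight products needed before the two telescoping expansions can be merged, a bookkeeping point the paper passes over silently.
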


\begin{proof}
First note that
\begin{align*}
\psi(\Pb;\overline{Q}) - \psi(\Pb;Q)  &= \int m_T \left( \prod_{t=t}^{T} d\overline{Q}_t  \ d\Pb_t - \prod_{t=t}^{T} d{Q}_t  \ d\Pb_t \right) \\
&= \int m_T (d\overline{Q}_T - dQ_T) \ d\Pb_T \prod_{t=1}^{T-1} d\overline{Q}_t  \ d\Pb_t + \int m_T \ dQ_T \ d\Pb_T \prod_{t=1}^{T-1} d\overline{Q}_t  \ d\Pb_t \\
&= \int m_T (d\overline{Q}_T - dQ_T) \ d\Pb_T \prod_{t=1}^{T-1} d\overline{Q}_t  \ d\Pb_t + \int m_{T-1} \prod_{t=1}^{T-1} d\overline{Q}_t  \ d\Pb_t \\
&= \sum_{t=1}^T \int m_t (d\overline{Q}_t - dQ_t) \ d\Pb_t \prod_{s=0}^{t-1} d\overline{Q}_t  \ d\Pb_t
\end{align*}
where the first equality follows by definition, the second by adding and subtracting the last term, the third by definition of $m_t$, and the fourth by repeating this process $T$ times.

Now we have that the expected contribution to the influence function due to estimating $Q$ when it depends on $\Pb$ is
\begin{align*}
\E \sum_{t=1}^T &\left( \prod_{s=0}^{t-1}  \frac{ d\overline{Q}_s }{ d\overline\pi_s } \right) \int \overline\phi_t \ \overline{m}_t \ d\nu = \sum_{t=1}^T \int \overline\phi_t \ d\pi_t \ \overline{m}_t \ d\nu \ d\Pb_t \left( \prod_{s=0}^{t-1}  \frac{ d\overline{Q}_s }{ d\overline\pi_s } d\pi_s \ d\Pb_s \right) \\
&= \sum_{t=1}^T \int (\overline\phi_t \ d\pi_t) (\overline{m}_t - m_t) \ d\nu \ d\Pb_t \left( \prod_{s=0}^{t-1}  \frac{ d\overline{Q}_s }{ d\overline\pi_s } d\pi_s \ d\Pb_s \right) \\
& \hspace{.4in} + \sum_{t=1}^T \int (\overline\phi_t \ d\pi_t) \ m_t \ d\nu \ d\Pb_t \left( \prod_{s=0}^{t-1}  \frac{ d\overline{Q}_s }{ d\overline\pi_s } d\pi_s \ d\Pb_s \right) \\
&= \sum_{t=1}^T \int (\overline\phi_t \ d\pi_t) (\overline{m}_t - m_t) \ d\nu \ d\Pb_t \left( \prod_{s=0}^{t-1}  \frac{ d\overline{Q}_s }{ d\overline\pi_s } d\pi_s \ d\Pb_s \right) \\
& \hspace{.4in} + \sum_{t=1}^T \sum_{s=1}^t \int (\overline\phi_t \ d\pi_t) \ m_t \ d\nu \ d\Pb_t \left( \prod_{r=0}^{t-1} d\overline{Q}_r \ d\Pb_r \right) \left( \frac{d\pi_s-d\overline\pi_s}{d\overline\pi_s} \right)  \left( \prod_{r=0}^{s-1}  \frac{ d\pi_s }{ d\overline\pi_s } \right) \\
&\hspace{.4in} +  \sum_{t=1}^T \int (\overline\phi_t \ d\pi_t) \ m_t \ d\nu \ d\Pb_t \left( \prod_{s=0}^{t-1} dQ_s  \ d\Pb_s \right)
\end{align*}
where the first equality follows by iterated expectation, the second by adding and subtracting the second term in the sum, and the third 
by the same logic as in Lemma \ref{lem:rem_qknown}.

Now considering the last term in the above display plus $\psi(\Pb;\overline{Q}) - \psi(\Pb;Q)$ we have
\begin{align*}
\psi(\Pb;\overline{Q}) - \psi(\Pb;Q) &+ \sum_{t=1}^T \int (\overline\phi_t \ d\pi_t) \ m_t \ d\nu \ d\Pb_t \left( \prod_{s=0}^{t-1} dQ_s  \ d\Pb_s \right) \\
&= \sum_{t=1}^T \int  m_t (d\overline{Q}_t - dQ_t + \overline\phi_t \ d\pi_t) \ d\nu \ d\Pb_t \left( \prod_{s=0}^{t-1} dQ_s  \ d\Pb_s \right)
\end{align*}
which yields the result.
\end{proof}

Now we need to translate the remainder terms from Lemmas \ref{lem:rem_qknown} and \ref{lem:rem_contribution} to the incremental propensity score intervention setting. The remainder from Lemma \ref{lem:rem_qknown} equals
\begin{align*}
&\sum_{t=1}^T \sum_{s =1}^t \int \left( m_t^* - \overline{m}_t \right)\left( \frac{d\pi_s-d\overline\pi_s}{d\overline\pi_s} \right) \left( \prod_{r=1}^{s-1}   \frac{d\pi_r}{d\overline\pi_r} \right)  \left( \prod_{r=1}^t d\overline{Q}_r \ d\Pb_r \right) \\
&= \sum_{t=1}^T \sum_{s =1}^t \int \Big\{ (\overline{m}_{t+1} - {m}_{t+1}) d\overline{Q}_{t+1} d\Pb_{t+1} + m_{t+1}( d\overline{Q}_{t+1} - dQ_{t+1}) d\Pb_{t+1}  \\
&\hspace{.4in} +  (m_t - \overline{m}_t) \Big\}  \left( \frac{d\pi_s - d\overline\pi_s}{d\overline\pi_s} \right)  \left( \prod_{r=1}^{s-1}   \frac{d\pi_r}{d\overline\pi_r} \right)  \left[ \prod_{r=1}^t \left( \frac{d\overline{Q}_r}{d\pi_r} \right) d\pi_r \ d\Pb_r \right] \\
&\lesssim \sum_{t=1}^T \sum_{s =1}^t  \Big( \| \overline{m}_{t+1} - m_{t+1} \| + \|\overline\pi_{t+1} - \pi_{t+1} \| + \| m_t - \overline{m}_t \| \Big) \| \pi_s - \overline\pi_s \|
\end{align*}
where the last inequality follows since
$$ d\overline{Q}_t - dQ_t =  \frac{\delta(2 a_t - 1)(\overline\pi_t-\pi_t)}{ (\delta \overline\pi_t + 1-\pi_t)(\delta \pi_t + 1-\pi_t)} . $$

For the remainder from Lemma \ref{lem:rem_contribution} first note that
$$ \int \overline\phi_t \ d\pi_t = \frac{\delta (2 a_t-1) (\pi_t-\overline\pi_t)}{(\delta \overline\pi_t + 1-\overline\pi_t)^2}  $$
where we used the form of the efficient influence function derived in Lemma \ref{lem:eif_contribution}. Combining the two previous expressions gives
$$ d\overline{Q}_t - dQ_t + \int \overline\phi_t \ d\pi_t = \frac{\delta(\delta-1)(2 a_t - 1) (\overline\pi_t-\pi_t)^2}{ (\delta \overline\pi_t + 1-\overline\pi_t)^2 (\delta \pi_t + 1-\pi_t)} . $$
Thus the remainder from Lemma \ref{lem:rem_contribution} is
\begin{align*}
&\sum_{t=1}^T \int (\overline\phi_t \ d\pi_t) (\overline{m}_t - m_t) \ d\nu \ d\Pb_t \left( \prod_{s=0}^{t-1}  \frac{ d\overline{Q}_s }{ d\overline\pi_s } d\pi_s \ d\Pb_s \right) \\
& \hspace{.4in} + \sum_{t=1}^T \sum_{s=1}^t \int (\overline\phi_t \ d\pi_t) \left( \frac{d\pi_s-d\overline\pi_s}{d\overline\pi_s} \right)  m_t \ d\nu \ d\Pb_t \left( \prod_{r=0}^{t-1} d\overline{Q}_r \ d\Pb_r \right) \left( \prod_{r=0}^{s-1}  \frac{ d\pi_s }{ d\overline\pi_s } \right) \\
& \hspace{.4in} + \sum_{t=1}^T \int  m_t (d\overline{Q}_t - dQ_t + \overline\phi_t \ d\pi_t) \ d\nu \ d\Pb_t \left( \prod_{s=0}^{t-1} dQ_s  \ d\Pb_s \right) \\
& \lesssim \sum_{t=1}^T  \| \pi_t - \overline\pi_t \| \left( \| \overline{m}_t - m_t \|  + \sum_{s=1}^t \| \pi_s - \overline\pi_s \| + \| \pi_t - \overline\pi_t \| \right) .
\end{align*}

The condition given in Theorem \ref{thm:unif_clt}, that for $s \leq t \leq T$ we have
$$ \left( \sup_{\delta \in \mathcal{D}} \| \hat{m}_{t,\delta} - m_{t,\delta} \| + \| \hat\pi_t - \pi_t \| \right) \| \hat\pi_s - \pi_s \| = o_\Pb(1/\sqrt{n}) , $$
therefore ensures that the above remainders from Lemmas \ref{lem:rem_qknown} and \ref{lem:rem_contribution} are negligible up to order $n^{-1/2}$ uniformly in $\delta$. Therefore $\| B_{n,2} \|_\mathcal{D}=o_\Pb(1)$, which concludes the proof.

\subsection{Proof of Theorem \ref{thm:unif_ci}}
\label{proof:unif_ci}

As in the proof of Theorem \ref{thm:unif_clt}, let $\| f \|_\mathcal{D} = \sup_{\delta \in \mathcal{D}} | f(\delta) |$ denote the supremum norm with respect to $\delta$, and define the processes
\begin{align*}
\widehat\Psi_n(\delta) &= \sqrt{n} \{\hat\psi(\delta) - \psi(\delta)\} / \hat\sigma(\delta) \\
\widehat\Psi_n^*(\delta) &= \Gn[\xi \{ \varphi(\bZ;\hat{\boldsymbol\eta}_{\text{-}S},\delta)-\hat\psi(\delta)\} /\hat\sigma(\delta) ]  \\
\Psi_n^*(\delta) &= \Gn[\xi \{ \varphi(\bZ;\boldsymbol{\eta}, \delta)-\psi(\delta)\} /\sigma(\delta) ]  .
\end{align*}
Note that the star superscripts denote multiplier bootstrap processes. As before, let $\Gb(\delta)$ denote the mean-zero Gaussian process with covariance $\E\{ \widetilde\varphi(\bZ;\boldsymbol\eta,\delta_1)\widetilde\varphi(\bZ;\boldsymbol\eta,\delta_2) \}$. 

Since
\begin{align*}
&\Pb \left\{ \hat\psi(\delta) - \frac{ \hat{c}_\alpha \hat\sigma(\delta)}{\sqrt{n} } \leq \psi(\delta) \leq \hat\psi(\delta) +  \frac{ \hat{c}_\alpha \hat\sigma(\delta)}{\sqrt{n} }, \text{ for all } \delta \in \mathcal{D} \right\} \\
& \hspace{.4in} = \Pb\left( \sup_{\delta \in \mathcal{D}} \left| \frac{\hat\psi(\delta) - \psi(\delta)}{\hat\sigma(\delta)/\sqrt{n}} \right| \leq \hat{c}_\alpha \right) = \Pb\left(  \| \widehat\Psi_n \|_\mathcal{D} \leq \hat{c}_\alpha \right) ,
\end{align*}
the result of Theorem \ref{thm:unif_ci} requires that we show
$$ \left| \Pb\left(  \| \widehat\Psi_n \|_\mathcal{D} \leq \hat{c}_\alpha \right) - \Pb\left(  \| \widehat\Psi^*_n \|_\mathcal{D} \leq \hat{c}_\alpha \right) \right| =  o(1) , $$
which yields the desired result since $\Pb(  \| \widehat\Psi^*_n \|_\mathcal{D} \leq \hat{c}_\alpha )=1-\alpha$ by definition of $\hat{c}_\alpha$. 

We showed in the proof of Theorem \ref{thm:unif_clt} that $ \| \widehat\Psi_n  - \Psi_n \|_\mathcal{D} = o_\Pb(1) $ which implies that
 $| \| \widehat\Psi_n \|_\mathcal{D} - \| \Psi_n \|_\mathcal{D} | = o_\Pb(1)$, 
and by Corollary 2.2 in \textcite{chernozhukov2014gaussian} we have
$$ \Big| \| \Psi_n \|_\mathcal{D} - \| \Gb \|_\mathcal{D} \Big| = o_\Pb(1) . $$
Hence by Lemma 2.3 in  \textcite{chernozhukov2014gaussian} it follows that
$$ \sup_{t \in \R} \left| \Pb\Big( \| \widehat\Psi_n \|_\mathcal{D} \leq t \Big) - \Pb\Big( \| \Gb \|_\mathcal{D} \leq t \Big) \right| = o(1) . $$
Similarly, by Corollary 2.2 of \textcite{belloni2015uniformly} we have
$$ \Big| \| \widehat\Psi^*_n \|_\mathcal{D}  - \| \Psi^*_n \|_\mathcal{D} \Big| = o_\Pb(1) \ , \  \Big| \| \Psi^*_n \|_\mathcal{D} - \| \Gb \|_\mathcal{D} \Big| = o_\Pb(1) $$
so that again by Lemma 2.3 in  \textcite{chernozhukov2014gaussian}
$$ \sup_{t \in \R} \left| \Pb\Big( \| \widehat\Psi^*_n \|_\mathcal{D} \leq t \Big) - \Pb\Big( \| \Gb \|_\mathcal{D} \leq t \Big) \right| = o(1). $$
This yields the result, since $| \Pb(  \| \widehat\Psi_n \|_\mathcal{D} \leq \hat{c}_\alpha ) - \Pb(  \| \widehat\Psi^*_n \|_\mathcal{D} \leq \hat{c}_\alpha ) | $ is bounded above by
$$ \sup_{t \in \R} \left| \Pb\Big( \| \widehat\Psi_n \|_\mathcal{D} \leq t \Big) - \Pb\Big( \| \Gb \|_\mathcal{D} \leq t \Big) \right| +  \sup_{t \in \R} \left| \Pb\Big( \| \widehat\Psi^*_n \|_\mathcal{D} \leq t \Big) - \Pb\Big( \| \Gb \|_\mathcal{D} \leq t \Big) \right| = o(1) . $$

\pagebreak
\subsection{R Code}
\label{sec:code}

\bigskip

\begin{verbatim}
### this function requires the following inputs:
###   dat: dataframe (in long not wide form if longitudinal) with columns
###     `time', `id', outcome `y', treatment `a'
###   x.trt: covariate matrix for treatment regression
###   x.out: covariate matrix for outcome regression
###   delta.seq: sequence of delta values
###   nsplits: number of sample splits
### NOTE: dat, x.trt, x.out should all have the same number of rows

ipsi <- function(dat, x.trt, x.out, delta.seq, nsplits){

# setup storage
ntimes <- length(table(dat$time)); n <- length(unique(dat$id))
k <- length(delta.seq); ifvals <- matrix(nrow=n,ncol=k); est.eff <- rep(NA,k)
wt <- matrix(nrow=n*ntimes,ncol=k); cumwt <- matrix(nrow=n*ntimes,ncol=k)
rt <- matrix(nrow=n*ntimes,ncol=k); vt <- matrix(nrow=n*ntimes,ncol=k)

s <- sample(rep(1:nsplits,ceiling(n/nsplits))[1:n])
slong <- rep(s,rep(ntimes,n))

for (split in 1:nsplits){ print(paste("split",split)); flush.console()

# fit treatment model
trtmod <- ranger(a ~ ., dat=cbind(x.trt,a=dat$a)[slong!=split,])
dat$ps <- predict(trtmod, data=x.trt)$predictions 

for (j in 1:k){ print(paste("delta",j)); flush.console()
delta <- delta.seq[j]

# compute weights
wt[,j] <- (delta*dat$a + 1-dat$a)/(delta*dat$ps + 1-dat$ps)
cumwt[,j] <- as.numeric(t(aggregate(wt[,j],by=list(dat$id),cumprod)[,-1]))
vt[,j] <- (1-delta)*(dat$a*(1-dat$ps) - (1-dat$a)*delta*dat$ps)/delta

# fit outcome models
outmod <- vector("list",ntimes); rtp1 <- dat$y[dat$time==end]
print("fitting regressions"); flush.console()
for (i in 1:ntimes){ 
  t <- rev(unique(dat$time))[i]
  outmod[[i]] <- ranger(rtp1 ~ ., 
    dat=cbind(x.out,rtp1)[dat$time==t & slong!=split,])
  newx1 <- x.out[dat$time==t,]; newx1$a <- 1
  m1 <- predict(outmod[[i]], data=newx1)$predictions
  newx0 <- x.out[dat$time==t,]; newx0$a <- 0
  m0 <- predict(outmod[[i]], data=newx0)$predictions
  pi.t <- dat$ps[dat$time==t]
  rtp1 <- (delta*pi.t*m1 + (1-pi.t)*m0) / (delta*pi.t + 1-pi.t)
  rt[dat$time==t,j] <- rtp1 }
  
ifvals[s==split,j] <- ((cumwt[,j]*dat$y)[dat$time==end] + 
  aggregate(cumwt[,j]*vt[,j]*rt[,j],by=list(dat$id),sum)[,-1])[s==split]

} }

# compute estimator
for (j in 1:k){ est.eff[j] <- mean(ifvals[,j]) }

# compute asymptotic variance
sigma <- sqrt(apply(ifvals,2,var))
eff.ll <- est.eff-1.96*sigma/sqrt(n); eff.ul <- est.eff+1.96*sigma/sqrt(n)

# multiplier bootstrap
eff.mat <- matrix(rep(est.eff,n),nrow=n,byrow=T)
sig.mat <- matrix(rep(sigma,n),nrow=n,byrow=T)
ifvals2 <- (ifvals-eff.mat)/sig.mat
nbs <- 10000; mult <- matrix(2*rbinom(n*nbs,1,.5)-1,nrow=n,ncol=nbs)
maxvals <- sapply(1:nbs, function(col){ 
  max(abs(apply(mult[,col]*ifvals2,2,sum)/sqrt(n))) } )
calpha <- quantile(maxvals, 0.95)
eff.ll2 <- est.eff-calpha*sigma/sqrt(n); eff.ul2 <- est.eff+calpha*sigma/sqrt(n)

return(list(est=est.eff, sigma=sigma, ll1=eff.ll,ul1=eff.ul, 
  calpha=calpha, ll2=eff.ll2,ul2=eff.ul2))

}
\end{verbatim}

\end{document}